\def\ket#1{{\lvert}#1\rangle}
\def\bra#1{{\langle}#1\rvert}
\def\proj#1{\ket{#1}\bra{#1}}
\def\norm#1{\left\| #1 \right\|}
\def\abs#1{\left| #1 \right|}
\newcommand{\N}{\Gamma}
\newcommand{\E}{{\overrightarrow{E}}}
\newcommand{\R}{\overrightarrow{{\cal R}}}
\newcommand{\w}{{\sf w}}
\newcommand{\p}{{\sf p}}
\newcommand{\f}{\theta}
\newcommand{\st}{$s$-$t$ }
\newcommand{\sm}{$s$-$M$ }
\newcommand{\ssm}{$\sigma$-$M$ }
\newcommand{\B}{{\cal B}}
\newcommand{\palt}{\p^{\sf alt} }
\newcommand{\falt}{\f^{\sf alt} }
\newcommand{\Aalt}{{\cal A}^{\sf alt}}
\newtheorem{theorem}{Theorem}[section]
\newtheorem{lemma}[theorem]{Lemma}
\newtheorem{corollary}[theorem]{Corollary}
\newtheorem{definition}[theorem]{Definition}
\newcommand{\eq}[1]{\hyperref[eq:#1]{(\ref*{eq:#1})}}
\newcommand{\thm}[1]{\hyperref[thm:#1]{Theorem~\ref*{thm:#1}}}
\newcommand{\lem}[1]{\hyperref[lem:#1]{Lemma~\ref*{lem:#1}}}
\newcommand{\cor}[1]{\hyperref[cor:#1]{Corollary~\ref*{cor:#1}}}
\newcommand{\defin}[1]{\hyperref[def:#1]{Definition~\ref*{def:#1}}}
\newcommand{\fig}[1]{\hyperref[fig:#1]{Figure~\ref*{fig:#1}}}
\newcommand{\sect}[1]{\hyperref[sec:#1]{Section~\ref*{sec:#1}}}
\newcommand{\app}[1]{\hyperref[app:#1]{Appendix~\ref*{app:#1}}}
\newcommand{\tabl}[1]{\hyperref[tab:#1]{Table~\ref*{tab:#1}}}
\begin{document}

\title{Quantum Walks for Chemical Reaction Networks}

\author{Seenivasan Hariharan}
\affiliation{Institute of Theoretical Physics, University of Amsterdam, Amsterdam, The Netherlands}
\affiliation{QuSoft, CWI, Amsterdam, The Netherlands}
\orcid{0000-0003-4509-8454}
\email{hseeni@gmail.com}

\author{Sebastian Zur}
\affiliation{IRIF \& CNRS, Paris, France}
\thanks{The majority of this work was conducted while SZ was affiliated with CWI \& QuSoft, the Netherlands.}

\author{Sachin Kinge}
\affiliation{Toyota Motor Europe, Materials Engineering Division, Zaventum, Belgium}
\orcid{0000-0002-9149-7440}

\author{Lucas Visscher}
\affiliation{Department of Chemistry and Pharmaceutical Sciences, Vrije Universiteit, Amsterdam, The Netherlands}
\orcid{0000-0002-7748-6243}

\author{Kareljan Schoutens}
\affiliation{Institute of Theoretical Physics, University of Amsterdam, Amsterdam, The Netherlands}
\affiliation{QuSoft, CWI, Amsterdam, The Netherlands}
\orcid{0000-0002-4159-877X}

\author{Stacey Jeffery}
\affiliation{QuSoft, CWI, Amsterdam, The Netherlands}
\affiliation{KdVI, University of Amsterdam, Amsterdam, The Netherlands}
\orcid{0000-0003-0046-5089}

\maketitle

\begin{abstract}
Near a detailed-balance equilibrium, the perturbed mass-action dynamics of a chemical reaction network (CRN) map exactly onto an electrical-flow problem on the bipartite species–reaction graph: chemical potentials become electrical potentials, Onsager coefficients become conductances, and the instantaneous Gibbs free-energy consumption equals the dissipated electrical energy. We exploit this map to design quantum walk algorithms that decide species reachability, sample reachable species, approximate any individual steady-state reaction flux, and estimate the total Gibbs dissipation. The first three follow from standard electrical-flow quantum walks; the last is non-trivial because the chemical flow is not the minimum-energy electrical flow on the same graph. We resolve this via a new use of alternative neighbourhoods in multidimensional quantum walks, which forces the walker onto the mass-action flow whenever the network is $\sigma$-$M$ rigid. In an adjacency-matrix QRAM access model the algorithms achieve up to a quadratic speedup over classical methods - for example $\Omega(n^{3/2})$ vs $\Omega(n^2)$ for reachability - and dissipation-aware bounds tighten this further when the perturbation is concentrated.
\end{abstract}

\section{Introduction}

Chemical reaction networks (CRNs)~\cite{Feinberg2019foundations}, models of interacting chemical species connected by reaction rules, are central to catalysis~\cite{Steiner2022autonomous}, atmospheric chemistry~\cite{Yang2024CRNexoplanet}, and biochemistry~\cite{Loskot2019BCRNs}, where they are used to reveal short-lived intermediates, elementary reaction mechanisms, and global reaction pathways. Research on CRNs broadly falls into network \emph{generation} and network \emph{analysis}~\cite{Unsleber2020exploration, Turtscher2022pathfinder}: while recent advances have automated the construction of large-scale CRNs~\cite{Stocker2020ML-CRN, Unsleber2022chemoton, Wen2023CRN-ML, Liu2025MLCRNgas}, extracting meaningful structural and dynamical insight from them remains a significant computational challenge.

Perturbations to a CRN can affect its behaviour differently depending on whether its structure is fixed or variable. We focus on fixed-structure CRNs, defined by a finite set of species, e.g.\ $\{A,B,C,D\}$, and a finite set of reactions between these species, e.g.\ $A \rightarrow 2B, A+C \rightarrow D$. Here, the species--reaction graph remains unchanged and perturbations alter only concentrations or external fluxes. Even within this framework, injecting species can shift steady states and activate alternative pathways, increasing the system's effective dimensionality and coupling~\cite{Nicolaou2023prevalence, Banaji2022adding}. Such effects undermine classical algorithms that exploit sparsity or local independence, rendering the analysis of large CRNs computationally challenging despite a fixed underlying network~\cite{Ismail2022graphCRN_review, Turtscher2022pathfinder}. The dynamics on a fixed-structure CRN are most commonly modelled by \textit{mass action kinetics}~\cite{Feinberg2019foundations}, a continuous, deterministic model in which reaction rates depend polynomially on species concentrations; under standard physical assumptions (reversibility, detailed balance, particle conservation) the resulting evolution carries a natural thermodynamic interpretation, in which the system dissipates a free-energy-like function as it relaxes towards a thermodynamic equilibrium. The associated non-linear differential equations are nevertheless analytically intractable and computationally costly at scale~\cite{Doty2018complexityAtCRN, Doty2024complexityDisCRN, Alaniz2023complexitySCRN, Einkemmer2024lowCME}, and many existing graph-based analyses encode them into directed weighted graphs in different and inequivalent ways~\cite{Haag2014interactive, Bergeler2015heuristics, Proppe2016uncertainty, Simm2017contextCRN, Simm2018error, Heuer2018integrated, Simm2018exploration, Proppe2018mechanism, Unsleber2020exploration, Baiardi2021expansive, Steiner2022autonomous, Unsleber2022chemoton, Steiner2024human, Muller2024heron, Csizi2024nanoscale, Weymuth2024scine, Bensberg2024uncertainty, Habershon2015samplingRW, Habershon2016graphrxnpath, Ismail2019auto_multi_rxn_graph, Ismail2022success_challenge_ML, Robertson2019fast_screen_graph, Robertson2020dense_network_rxn, Fakhoury2023contactmap_directedwalks_protein_1, Fakhoury2024contactmap_directedwalks_protein_2, Margraf2019systematic} (see \app{crn-models}).

A particularly clean encoding of mass-action dynamics, however, is the long-standing analogy between CRNs and \emph{electrical networks}. Already in the 1970s, Perelson and Oster~\cite{Perelson1974CRNcircuit} observed that, near a detailed-balance equilibrium, a perturbed mass-action system maps onto a linear flow problem on a weighted graph: species become nodes carrying chemical potentials, reactions become resistive edges whose conductances are determined by the Onsager coefficients, and an external species injection acts as an injected current. The electrical current along an edge then equals the net steady-state reaction flux, and the dissipated electrical energy equals the instantaneous Gibbs free-energy consumption. This correspondence has recently been revisited and sharpened within the framework of stochastic thermodynamics of CRNs~\cite{Avanzini2023circuittheoryCRN}; we emphasise that the analogy itself is not a contribution of this work. What is new here is its algorithmic exploitation: this electrical-flow representation turns out to be precisely the structure that quantum-walk algorithms are designed to use.

Quantum walks based on electrical-network parameters~\cite{belovs2013quantum, apers2022elfs} solve flow-related problems on weighted graphs with up to quadratic speedups over classical random walks, with cost governed by the effective resistance and the total network weight. By instantiating these algorithms on a bipartite species--reaction graph, which we call the \emph{mass-action system graph} (MASG), we obtain quantum algorithms that decide species reachability, sample reachable species, approximate any individual steady-state reaction flux, and estimate the total Gibbs free-energy consumption $\Phi(c)$ of the perturbed system. The first three results follow from off-the-shelf electrical-flow quantum walks. The fourth is technically more subtle: chemical flux on the MASG is constrained by the local stoichiometric ratios at each reaction vertex and consequently does \emph{not} coincide with the minimum-energy electrical flow on the same graph. We resolve this with a new use of \emph{alternative neighbourhoods} from the multidimensional quantum-walk framework~\cite{jeffery2023multidimensional, jeffery2025multidimensional, li2025multidimensional}: whereas alternative neighbourhoods were originally introduced to ease state preparation, we use them to \emph{encode local linear constraints on the flow}, forcing the quantum walker onto the chemically correct flow whenever the network satisfies a graph-theoretic condition we call $\sigma$-$M$ rigidity. To our knowledge, this is the first use of alternative neighbourhoods to sample from a constrained, uniquely-determined flow rather than from the minimum-energy flow, and is of independent interest for the multidimensional quantum-walk programme.. 

\section{Main results}

Assuming a set of natural and physically meaningful constraints on the mass-action system (reversibility, a positive detailed-balance equilibrium $c^*$, and particle conservation; see \sect{thermo}), and given access through QRAM to all thermodynamic quantities at $c^*$, we develop quantum algorithms that probe both structural and dynamical properties of a perturbed CRN. Given a species injection $\eta$ that drives the system to a new steady state $c$, the algorithms:
\begin{enumerate}
    \item[(i)] decide whether a given set of target species $M$ is reachable under $\eta$;
    \item[(ii)] return a representative target species, conditioned on reachability;
    \item[(iii)] approximate the net steady-state flux $J_r(c)$ through any reaction $r$ in the CRN; and
    \item[(iv)] approximate the total Gibbs free-energy consumption $\Phi(c)$.
\end{enumerate}
The algorithms run on a bipartite species--reaction graph that we introduce in \sect{masg}, the \emph{mass-action system graph} (MASG). Their cost is governed by the dissipation $\Phi(c)$ and the total MASG weight ${\sf W}$, and yields up to a quadratic speedup over classical random-walk-based methods on the same graph, sharpening further when the perturbation is concentrated.

From a purely quantum algorithmic perspective, results (iii) and (iv) require more than off-the-shelf machinery: the chemical flow on the MASG does \emph{not} coincide with the minimum-energy electrical flow on the same graph, so standard effective-resistance estimation~\cite{apers2022elfs} only yields a generally loose lower bound on $\Phi(c)$. We close this gap by exhibiting a novel use case of \textit{alternative neighbourhoods} in the multidimensional quantum-walk framework~\cite{jeffery2023multidimensional, jeffery2025multidimensional, li2025multidimensional}: rather than easing state preparation as in their original use, we use them to encode the local stoichiometric ratios at each reaction vertex as constraints on the flow, so that the modified walker is forced onto the chemical flow whenever the network satisfies a graph-theoretic condition we call $\sigma$-$M$ rigidity (\sect{alt}). To our knowledge this is the first use of alternative neighbourhoods to sample from a constrained, uniquely-determined flow rather than the minimum-energy flow.

\section{Preliminaries}\label{sec:prelims}

In this section we collect the graph-theoretic, chemical, and quantum-walk
notions that we use throughout the paper. To keep the main text focused, we
defer pedagogical worked examples and the more technical derivations to
\app{worked-electrical}, \app{linear-response}, and \app{phase-estimation}.

\subsection{Graph theory and electrical networks}\label{sec:graph-elec}

\begin{definition}[Network]\label{def:network}
A network is a connected weighted graph $G = (V,E,\w)$ with a vertex set $V$, an (undirected) edge set $E$ and some weight function $\w:E\rightarrow\mathbb{R}_{>0}$. Since edges are undirected, we can equivalently describe the edges by some set $\E$ such that for all $\{u,v\}\in E$, exactly one of $(u,v)$ or $(v,u)$ is in $\E$. The choice of edge directions is arbitrary. Then we can view the weights as a function $\w:\E\rightarrow\mathbb{R}_{> 0}$, and for all $(u,v)\in \E$, define $\w_{v,u}=\w_{u,v}$. For convenience, we define $\w_{u,v}=0$ for every pair of vertices such that $(u,v)\not\in E$. We write
$${\sf W} := \sum_{(u,v)\in\E}\w_{u,v},$$
for the total weight of the network.

\noindent For an implicit network $G$, and $u\in V$, we will let $\N(u)$ denote the \emph{neighbourhood} of $u$:
$$\N(u):=\{v\in V:\{u,v\}\in E\}.$$
We use the following notation for \emph{the out- and in-neighbourhoods} of $u\in V$:
\begin{equation}
\begin{split}
\N^+(u) &:= \{v\in\N(u):(u,v)\in \E\}\\
\N^-(u) &:= \{v\in\N(u):(v,u)\in \E\},
\end{split}\label{eq:neighbourhoods}
\end{equation}
\end{definition}

To build intuition from physics and apply results from electrical network theory, it is useful to interpret our networks as \textit{electrical networks}.

\begin{definition}[Electrical network]\label{def:elec-network}
Given a network $G = (V,E,\w)$ with a weight function $\w$, we can interpret every edge $\{u,v\}\in E$ as a resistor with resistance $1/\w_{u,v}$. This allows $G$ to be modeled as an \emph{electrical network}.
\end{definition}

\begin{definition}[Flow]\label{def:flow}
A \emph{flow} on a network $G = (V, E, w)$ is a real-valued function on the edges $\f: E \to \mathbb{R}$, such that $\f_{u,v} = -\f_{v,u}$ for every $\{u,v\} \in E$. For any flow $f$ on $G$ and any vertex $u \in V$, we define the net flow leaving $u$ as $\f_u = \sum_{v \in \N(u)} \f_{u,v}$. Flow is said to be \emph{conserved} at $u$ if $\f_u = 0$. A vertex $u$ is called a \emph{source} if $\f_u > 0$ and a \emph{sink} if $\f_u < 0$.

Given a \emph{marked set} $M \subset V$ and an \emph{initial probability distribution} $\sigma$ on $V \setminus M$, a \emph{(unit) \ssm flow} is a flow $\f$ such that each source $u$ is in the support of $\sigma$ and satisfies $\f_u = \sigma(u)$, and each sink is in $M$ with $\sum_{u \in M} \f_u = -1$.

The \emph{energy} of a flow $\f$ is defined as:
\begin{equation*}
{\sf E}(\f):=\sum_{(u,v)\in\E}\frac{\f_{u,v}^2}{\w_{u,v}}.
\end{equation*}

The \emph{effective resistance} ${\sf R}_{\sigma,M}$ is the minimal energy ${\sf E}(\f)$ over all unit \ssm flows $\f$. When $\sigma$ is supported on a single vertex $s$, $M$ is supported on a single vertex $t$, or both, we simplify the notation to ${\sf R}_{s,M}$, ${\sf R}_{\sigma,t}$, or ${\sf R}_{s,t}$, respectively. The \emph{\ssm electrical flow} is the unique unit \ssm flow that achieves this minimal energy.
\end{definition}

\begin{definition}[Potential]\label{def:potential}
A \emph{potential vector} (also known as potential function) on a network $G = (V,E,\w)$ is a real-valued function $\p:V \rightarrow \mathbb{R}$ that assigns a potential $\p_u$ to each vertex $u \in V$.
\end{definition}

Two fundamental laws governing electrical networks are \textit{Kirchhoff's Law} (also known as Kirchhoff's Node Law) and \textit{Ohm's Law}. Kirchhoff's Law defines an \ssm flow as follows:

\begin{definition}[Kirchhoff's Law]\label{def:kcl}
For any given \ssm flow $\f$ on an electrical network $G = (V,E,\w)$, the amount of electrical flow entering any vertex $u \in V \backslash \{\textrm{supp}(\sigma) \cup M\}$ must equal the amount of flow exiting $u$. In other words:
$$ \sum_{v \in \N(u)} \f_{u,v} = 0. $$
\end{definition}

Ohm's Law, on the other hand, states that if a unit of current is injected according to the initial probability distribution $\sigma$ and extracted at the sinks in $M$ of the electrical network $G$, then an induced potential vector $\p$ is generated, as described in \defin{potential}, which is related to the \ssm electrical flow $\f$ in the following manner:

\begin{definition}[Ohm's Law]\label{def:ohm}
    Let $\f$ be the \ssm electrical flow on an electrical network $G = (V,E,\w)$. Then there exists a potential vector $\p$ such that the potential difference between the two endpoints of any edge $\{u,v\} \in E$ is equal to the amount of electrical flow $\f_{u,v}$ along this edge multiplied with the resistance $1/\w_{u,v}$, that is, $\p_u-\p_v=\f_{u,v}/\w_{u,v}$.
\end{definition}

The potential $\p$ induced by an \ssm electrical flow $\f$ in Ohm's Law is not unique. Therefore, it is common practice to consider the potential $\p$ that assigns $\p_u = 0$ for every $u \in M$, in which case $\p_u = \sigma(u){\sf R}_{u,M}$ for every $u \in \textrm{supp}(\sigma)$, where ${\sf R}_{u,M}$ is the effective resistance between $u$ and $M$. A small worked example illustrating these definitions on a four-vertex network is given in \app{worked-electrical}.

\subsection{Chemical reaction networks}\label{sec:crn}

A chemical reaction network (CRN) provides a mathematical framework for representing and analysing a system of complex chemical reactions. It is particularly useful for studying the kinetics of large reaction networks. In this section, we adopt the definition of a CRN, as well as related notions, as introduced by Feinberg~\cite{Feinberg2019foundations}.

\begin{definition}[Chemical Reaction Network]\label{def:crn}
A \textit{chemical reaction network} (CRN) is a tuple \(({\cal S}, {\cal C}, {\cal R})\), where:
\begin{enumerate}
    \item[(i)] \({\cal S}\) is a finite set, whose elements are called the \textit{species} of the network.
    \item[(ii)] \({\cal C} \subset \mathbb{R}_{\geq 0}^{{\cal S}}\) is a finite set of vectors called the \textit{complexes} of the network.
    \item[(iii)] \({\cal R} \subset {\cal C} \times {\cal C}\) is a finite set of ordered pairs called \textbf{reactions}, satisfying the following conditions:
    \begin{enumerate}
        \item[(a)] \(\forall y \in {\cal C}\), \((y, y) \notin {\cal R}\) (no trivial reactions).
        \item[(b)] \(\forall y \in {\cal C}\), there exists \(y' \in {\cal C}\) such that either \((y, y') \in {\cal R}\) or \((y', y) \in {\cal R}\) (every complex participates in at least one reaction either as a reactant or as a product.
    \end{enumerate}
\end{enumerate}
For a reaction \((y, y') \in {\cal R}\), we denote it by \(y \to y'\), where \(y\) is called the \textit{reactant complex} and \(y'\) is the \textit{product complex}.
\end{definition}

As an example, consider the CRN defined by
\begin{align}\label{eq:example-crn}
    {\cal S} = \{&A, B, C, D, E\},\\
    {\cal C} = \{&A, 2B, A + C, D, B + E\},\\
    {\cal R} = \bigl\{&A \to 2B, 2B \to A, A + C \to D, \nonumber\\
    &\quad D \to A + C, D \to B + E,  \\
    &\quad B + E \to A + C \nonumber \bigr\}.
\end{align}

This corresponds to the reaction system whose standard reaction diagram is displayed as
\[
\begin{aligned}
    A &\rightleftharpoons 2B, \\
    A + C &\rightleftharpoons D, \\
    D &\rightarrow B + E, \\
    B + E &\rightarrow A + C.
\end{aligned}
\]

A CRN is not only characterized by its static network structure but also by the time evolution of species concentrations. These dynamics are governed by the \emph{kinetics} of the CRN. The kinetics act on a \emph{concentrations vector} \(c \in \mathbb{R}_{\ge 0}^{{\cal S}}\), whose components \(c_s\) (for each \(s \in {\cal S}\)) specify the concentration of \(s\)-molecules per unit volume of the mixture. In this work, we consider the following specific form of kinetics:

\begin{definition}[Mass Action Kinetics]\label{def:kin}
A \emph{mass action kinetics} for a chemical reaction network
\(({\cal S}, {\cal C}, {\cal R})\)
is an assignment to each reaction \(y \to y'\) of a \emph{mass action rate function}
\({\cal K}_{y \to y'} \colon \mathbb{R}_{\ge 0}^{{\cal S}} \to \mathbb{R}_{\ge 0}\)
such that there exists a positive number \(k_{y \to y'}\) with
\[
{\cal K}_{y \to y'}(c) = k_{y \to y'} \prod_{s \in {\cal S}} c_{s}^{y_s}.
\]
Here \(k_{y \to y'}\) is the \emph{rate constant} for the reaction \(y \to y'\), and the component \(y_s\) of \(y\) is the \emph{stoichiometric coefficient} of species \(s\) in the reactant complex \(y\).
\end{definition}

In the above definition we adopt the convention \(0^0 = 1\) when a particular concentration \(c_s\) is zero, and abbreviate $\prod_{s \in {\cal S}} c_s^{y_s} = c^y$, so that ${\cal K}_{y \to y'}(c) = k_{y \to y'} c^y$. A worked computation of the rate functions for two of the reactions in \eq{example-crn} is given in \app{rate-example}. A mass action kinetics for a CRN $({\cal S}, {\cal C}, {\cal R})$ is completely specified by assigning a positive rate constant $k_{y \to y'}$ to each reaction $y \to y' \in {\cal R}$, that is, by an element $k \in \mathbb{R}_+^{\cal R}$. We therefore speak of a \textit{mass action system (MAS)} $({\cal S}, {\cal C}, {\cal R},k)$, which is the CRN $({\cal S}, {\cal C}, {\cal R})$ combined with the mass action kinetics uniquely determined by~$k$.

\subsection{The thermodynamics of an MAS}\label{sec:thermo}

In this section, we delve deeper into the thermodynamics of an MAS, following the notions in~\cite{de2013non}. When studying the kinetics of an MAS, it is convenient to study \textit{reversible} reactions. For a CRN $({\cal S}, {\cal C}, {\cal R})$, we say that a reaction \(y \to y' \in {\cal R} \) is reversible if \(y' \to y \in {\cal R}\). For a reversible reaction \(y \to y'\), we define its \textit{net flux} $J_{y \to y'}(c)$ as
\begin{equation}\label{eq:netflux}
    J_{y \to y'}(c) := {\cal K}_{y \to y'}(c) - {\cal K}_{y' \to y}(c).
\end{equation}
This net flux is positive if the reaction proceeds predominantly in the forward direction, negative if the backward direction is dominant, and zero when forward and backward processes are perfectly balanced.

If all reactions in ${\cal R}$ for a CRN $({\cal S}, {\cal C}, {\cal R})$ are reversible, we say that the CRN is reversible. To avoid listing both directions of each reversible pair of reactions, we fix a subset $\R \subset {\cal R}$ called the set of \textit{oriented reactions}, containing exactly one direction from every reversible pair $y \to y', y' \to y \in {\cal R}$, similarly to how the directed edge set $\E$ is obtained from the edge set $E$ in a graph. We can fully characterise the set of reversible reactions ${\cal R}$ from just $\R$. If $r = y \to y'$, we use $\overline{r}$ to denote $y' \to y$. Depending on the chosen direction, $J_r(c)$ would change sign.

For a reversible CRN, if the net flux of all reactions is zero, that is, there exists a vector of concentrations \(c^* \in \mathbb{R}_{\ge 0}^{\cal S}\) such that, for each reaction $r \in \R$,
\begin{equation}\label{eq:db}
    {\cal K}_{r}(c^*) = {\cal K}_{\overline{r}}(c^*),
\end{equation}
then we say that the MAS is in \textit{detailed balance}. Such a vector \(c^*\) is often referred to as a (thermodynamic) \emph{equilibrium} of the MAS.

One may then ask what happens if we perturb this equilibrium by introducing or removing certain amounts of one or more species. Such a perturbation generally alters the exact cancellation of forward and backward fluxes, typically driving the system away from detailed balance. For a reaction $r = y \to y'$, let $\nu_{r,s} := y'_s - y_s$ denote the \textit{net stoichiometric coefficient} of $s$ in $r$. Then the system, with new concentrations $c$, will still satisfy the \textit{steady-state} condition for each species:
\begin{equation}\label{eq:steady}
    \frac{d c_s}{dt} = \sum_{r \in \R} \nu_{r,s}J_{r}(c) = \eta_s,
    \quad
    \text{for all species } s \in {\cal S}.
\end{equation}
Here \(\eta_s \in\mathbb{R} \) denotes the net \emph{external injection/removal rate} of species \(s\). By convention, \(\eta_s>0\) means species \(s\) is injected from an external source, while \(\eta_s<0\) indicates a net removal of \(s\). Note that \eq{steady} is invariant under the choice of direction in $\R$, since
\[\nu_{r,s}J_{r}(c) = -\nu_{\overline{r},s}J_{\overline{r}}(c).\]

We now introduce \textit{chemical potentials} $\{\mu_s(c)\}_{s \in {\cal S}}$. For a mass action system, these are defined as
\begin{equation}\label{eq:chempot}
    \mu_s(c) := RT \ln(c_s) + \mu_s^0,
\end{equation}
where $R$ and $T$ are the gas constant and absolute temperature, respectively, and $\mu_s^0$ is a reference potential which we may ignore as we are only interested in changes of $\mu_s(c)$ under a perturbation of concentrations. A standard linear-response argument (see \app{linear-response} for the full derivation) yields, in the limit \(\delta c_s/c^{*}_s\to 0\),
\begin{equation}\label{eq:chempot-change}
  \delta \mu_s(c) = RT\,\frac{\delta c_s}{c^{*}_s},
\end{equation}
and applying the same first-order expansion to $J_{r}(c)$ around $c^*$ gives
\begin{equation}\label{eq:netflux-change}
    \delta J_{r}(c) \;=\; -{\cal K}_{r}(c^*)\sum_{s \in {\cal S}} \frac{\nu_{r,s}}{c^{*}_s} \,\delta c_s.
\end{equation}
Introducing the \emph{affinity} of the reaction $r$,
\[
\Delta\mu_{r}(c) := -\sum_{s \in {\cal S}}\nu_{r,s} \,\delta\mu_s(c),
\]
and the \emph{Onsager coefficient}
\[
G_{r} := \frac{{\cal K}_{r}(c^*)}{RT}>0,
\]
we obtain from \eq{netflux-change} and \eq{db} the linear flux--affinity relation
\begin{equation}\label{eq:netflux-change2}
    J_{r}(c) = J_{r}(c^*) + \delta J_{r} = G_{r}\,\Delta\mu_{r}(c).
\end{equation}

The last thermodynamic quantity we need is the \emph{instantaneous rate of Gibbs free-energy consumption} $\Phi$:
\begin{equation*}
\Phi(c) = \sum_{r \in \R} J_r(c)\,\Delta\mu_r(c),
\end{equation*}
which measures the free-energy dissipated, i.e.\ the chemical ``power'' irreversibly lost (or supplied) by the running reactions, per unit time. Near equilibrium, by \eq{netflux-change2},
\begin{equation}\label{eq:free-energy}
    \Phi(c) = \sum_{r \in \R} J_r(c)^2/G_r.
\end{equation}
Due to the square, $\Phi(c)$ is invariant under the choice of direction in $\R$, since $G_r = G_{\overline{r}}$.

\subsection{Quantum walks}\label{sec:qw}

We employ quantum walk algorithms to compute properties of our chemical reaction networks. In contrast to classical random walks, which explore a graph diffusively, quantum walks exhibit interference and coherence and thereby admit quadratic speed-ups in search-type tasks~\cite{szegedy2004QMarkovChainSearch, magniez2007search, ambainis2019QuadSpeedupFindingMarkedQW, apers2019UnifiedFrameworkQWSearch} and, in some cases, even exponential improvements for structured graph problems~\cite{childs2003ExpSpeedupQW}. These features suggest that quantum algorithms could help overcome the combinatorial complexity inherent in CRNs, where the number of pathways, intermediates, and reaction configurations grows rapidly with the system size.

We adopt the electrical-network framework introduced by~\cite{belovs2013quantum} and further refined by~\cite{piddock2019quantum, apers2022elfs}; equivalent algorithms can be derived via the span-program formalism~\cite{ito2019approximate, jeffery2023quantum}. Throughout this work we treat this framework as a black box that produces quantum walk algorithms with specified problem-solving capabilities and complexities, deferring the underlying phase-estimation construction to \app{phase-estimation}.

The algorithms operate on the \emph{edge space} of $G$,
\begin{equation}\label{eq:hilbert}
    {\cal H}=\mathrm{span}\{\ket{u,v}:\{u,v\} \in E(G)\},
\end{equation}
in which each undirected edge $\{u,v\}$ appears twice, as $\ket{u,v}$ and $\ket{v,u}$. Two states play a distinguished role. The first is the normalised \emph{star state} of a vertex $u\in V(G)$,
\begin{equation}\label{eq:star-state}
    \ket{\psi_\star(u)} :=\frac{1}{\sqrt{\w_u}}\sum_{v\in\N(u)}\sqrt{\w_{u,v}}\ket{u,v},
\end{equation}
where $\w_u := \sum_{v \in \N(u)} \w_{u,v}$ is the \textit{weighted degree} of $u$. Measuring $\ket{\psi_\star(u)}$ yields the edge $\ket{u,v}$ with probability $\w_{u,v}/\w_u$, so $\ket{\psi_\star(u)}$ is a quantum analogue of the local classical-walk transition. The second is the (normalised) \emph{flow state} associated with the \sm electrical flow $\f$,
\begin{equation}\label{eq:flowstate}
    \ket{\f} := \frac{1}{\sqrt{2{\sf E}(\f)}}\sum_{(u,v) \in \E(G)}\frac{\f_{u,v}}{\sqrt{\w_{u,v}}}\left(\ket{u,v} - \ket{v,u}\right),
\end{equation}
which lives in the antisymmetric subspace of ${\cal H}$.

\subsubsection*{Quantum walk algorithms}\label{sec:walk-algo}

Depending on the application, we employ one of four quantum walk algorithms, each based on phase estimation of a unitary $U_{{\cal A}{\cal B}} = (2\Pi_{\cal A} - I)(2\Pi_{\cal B} - I)$ acting on ${\cal H}$. Here, $\Pi_{\cal A}$ denotes the projection onto the subspace ${\cal A}$ and hence $(2\Pi_{\cal A} - I)$ is the reflection through ${\cal A}$. The full construction of the subspaces ${\cal A}, {\cal B}$, the initial state $\ket{\psi_0}$, and the operator $U_{{\cal A}{\cal B}}$ is given in \app{phase-estimation}.

The complexities of these algorithms depend on the required precision of the phase estimation, the cost of generating $\ket{\psi_0}$, and the cost of applying $U_{{\cal A}{\cal B}}$. Since the initialisation of $\ket{\psi_0}$ and the reflections around ${\cal A}, {\cal B}$ differ depending on the chosen algorithm, we express their complexities using two abstract complexity parameters. For a graph $G$, marked set $M \subset V$ and distribution $\sigma$, define:
\begin{itemize}
    \item ${\sf S}$: an upper bound on the cost of preparing the quantum state
    \begin{equation}\label{eq:sigma}
       \ket{\sigma} = \sum_{u \in \text{supp}(\sigma)} \sigma(u) \ket{u}.
    \end{equation}
    This parameter captures the complexity of generating the initial state $\ket{\psi_0}$ (for further details, see Section~3.2.8 in~\cite{jeffery2023multidimensional}).
    \item ${\sf U}_{\star}$: an upper bound on the cost of verifying whether a vertex $u$ belongs to $\text{supp}(\sigma)$, the marked set $M$, or neither, as well as the cost of generating the star state $\ket{\psi_{\star}(u)}$ (see \eq{star-state}) for any $u \in V \setminus (\{s\} \cup M)$. This parameter captures the complexity of implementing the quantum walk operator $U_{{\cal AB}}$ (for further details, see Section~3.2.5 in~\cite{jeffery2023multidimensional}).
\end{itemize}

\noindent The first algorithm detects whether the marked set $M$ is non-empty:

\begin{theorem}[\cite{belovs2013quantum}]\label{thm:detect}
    Fix a network $G = (V,E,\w)$ as in \defin{network}, a marked set $M \subset V$ and an initial probability distribution $\sigma$ on $V \setminus M$, with the promise that either $M$ is empty or there is a path in $G$ connecting $\sigma$ to $M$. Then there exists a quantum walk algorithm that decides whether $M$ is empty with a constant probability of success in cost
    $$O\!\left({\sf S} + \sqrt{{\sf R}_{\sigma,M}{\sf W}}\,{\sf U}_{\star}\right).$$
\end{theorem}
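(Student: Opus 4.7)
The plan is to instantiate the edge-space phase-estimation algorithm of \sect{qw} in the generality where the single source $s$ is replaced by the distribution $\sigma$ and the single sink $t$ by the marked set $M$. Taking
\[
{\cal A}=\mathrm{span}\{\ket{\psi_\star(u)}:u\in V\setminus(\mathrm{supp}(\sigma)\cup M)\},\qquad \ket{\psi_0}\propto (I-\Pi_{\cal B})\sum_{u\in\mathrm{supp}(\sigma)}\sqrt{\sigma(u)}\,\ket{\psi_\star(u)},
\]
with ${\cal B}$ as in \eq{symm}, the algorithm prepares $\ket{\psi_0}$, runs phase estimation of $U_{\cal AB}=(2\Pi_{\cal A}-I)(2\Pi_{\cal B}-I)$ to a precision $\delta>0$ fixed below, and outputs $1$ iff the measured phase is $0$.

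Correctness reduces to analysing the overlap of $\ket{\psi_0}$ with the $(+1)$-eigenspace of $U_{\cal AB}$, which equals $({\cal A}\cap{\cal B})\oplus({\cal A}+{\cal B})^\perp$; since $\ket{\psi_0}\in{\cal B}^\perp$ by construction, the relevant question is whether $\ket{\psi_0}$ has any component in $({\cal A}+{\cal B})^\perp$. In the negative case $M=\emptyset$, every non-source vertex contributes its star to ${\cal A}$, and a short construction using Kirchhoff's Law (\defin{kcl}) lets me decompose $\ket{\psi_0}$ explicitly as a vector in ${\cal A}+{\cal B}$, so the component vanishes. In the positive case the \ssm electrical flow $\f$ exists by the connectivity promise, and I take its flow state $\ket{\f}$ from \eq{flowstate} as a positive witness: orthogonality to every star in ${\cal A}$ is exactly Kirchhoff's Law at interior vertices, orthogonality to ${\cal B}$ is built into the antisymmetrisation, and $\langle\psi_0|\f\rangle$ can be computed via Ohm's Law (\defin{ohm}) from potentials at the sources to confirm it is nonzero and of a controlled size.

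The remaining step is to calibrate $\delta$ using an effective spectral-gap estimate: the lemma of~\cite{belovs2013quantum,jeffery2023multidimensional} says that every unit $\ket{w}\in{\cal A}+{\cal B}$ places at most $O(\delta^2\,{\sf R}_{\sigma,M}{\sf W})$ of its spectral mass on eigenvalues of $U_{\cal AB}$ within phase $\delta$ of $1$. This is proved by turning a near-invariant vector into a unit \ssm flow whose energy is $O(1/\delta^2)$ and then invoking ${\sf E}(\f)\ge {\sf R}_{\sigma,M}$ to conclude. Choosing $\delta=\Theta(1/\sqrt{{\sf R}_{\sigma,M}{\sf W}})$ makes this leakage a constant strictly less than one, so in the negative case phase estimation rejects and in the positive case it accepts the genuine $(+1)$-component, each with constant probability. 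The total cost is $O({\sf S})$ to prepare $\ket{\psi_0}$ plus $O(1/\delta)=O(\sqrt{{\sf R}_{\sigma,M}{\sf W}})$ applications of $U_{\cal AB}$, each of cost $O({\sf U}_\star)$, matching the claim. The main obstacle is the effective-spectral-gap estimate; extracting a flow of controlled energy from an arbitrary near-invariant vector is the technical heart of the result, and I would invoke it as a black box from the cited prior work rather than redevelop it.
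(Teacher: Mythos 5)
The paper does not prove \thm{detect} at all---it imports it verbatim from~\cite{belovs2013quantum}, with \sect{qw} giving only the informal sketch of the two-reflection phase-estimation framework, the flow state as positive witness, and the membership $\ket{\psi_0}\in{\cal A}+{\cal B}$ in the negative case. Your proposal fleshes out exactly that same argument (generalised from $s$-$t$ to $\sigma$-$M$, with the effective spectral gap lemma invoked as a black box to calibrate the precision $\delta=\Theta(1/\sqrt{{\sf R}_{\sigma,M}{\sf W}})$), so it is correct and takes essentially the same approach as the paper and its cited source.
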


This is the only algorithm in this work that is ``purely'' a phase-estimation algorithm; the remaining ones combine (a slightly differently initialised) phase-estimation routine with other well-known quantum subroutines. We refer the interested reader to the original sources for details. Our second algorithm returns a marked vertex rather than only deciding existence:

\begin{theorem}[\cite{piddock2019quantum}]\label{thm:find}
    Fix a network $G = (V,E,\w)$ as in \defin{network}, a non-empty marked set $M \subset V$ and an initial probability distribution $\sigma$ on $V \setminus M$, with the promise that there is a path in $G$ connecting $\sigma$ to $M$. Then there exists a quantum walk algorithm that returns a marked element from $M$ with a constant probability of success in cost
    $$O\!\left({\sf S} + \sqrt{{\sf R}_{\sigma,M}{\sf W}}\log^3(|M|)\,{\sf U}_{\star}\right).$$
\end{theorem}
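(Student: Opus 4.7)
The plan is to upgrade the detection routine of \thm{detect} into a finding routine by (i) coherently producing, via phase estimation of $U_{\cal{AB}}$, a quantum state that approximates the electrical flow state $\ket{\f}$ defined in \eq{flowstate}, whose amplitudes encode the locations of the marked vertices, (ii) measuring this state in the edge basis to sample an edge incident to $M$, and (iii) using amplitude amplification to boost a constant success probability into a near-certain one. Throughout, we exploit the fact that $\ket{\psi_0}\in {\cal B}^\perp$ has a prescribed overlap with $\ket{\f}\in ({\cal A}+{\cal B})^\perp$ whenever $M \neq \emptyset$, and that this component is precisely what phase estimation of $U_{\cal{AB}}$ isolates in its $(+1)$-eigenspace.

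First, I would prepare $\ket{\psi_0}$ once, at cost ${\sf S}$; this is paid only at the beginning, which is why ${\sf S}$ appears without a logarithmic overhead in the final cost. A phase estimation of $U_{\cal{AB}}$ to precision $\varepsilon = \Theta(1/\sqrt{{\sf R}_{\sigma,M}{\sf W}})$ isolates, up to error $O(\varepsilon)$, the component of $\ket{\psi_0}$ along $({\cal A}+{\cal B})^\perp$; standard median amplification requires $O(\log|M|)$ parallel repetitions to suppress the phase-estimation failure probability to the level needed downstream. By \eq{flowstate} the resulting state is close to $\ket{\f}$, so a measurement in the edge basis returns an edge $(u,v)$ with distribution essentially proportional to $\f_{u,v}^2/\w_{u,v}$. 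A careful accounting of how the net unit of flow into $M$ distributes across the incident edges shows that an $\Omega(1/\log|M|)$ fraction of the measurement mass lies on edges adjacent to $M$, after which the candidate endpoint can be verified against $M$ at cost $O({\sf U}_{\star})$.

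Composing this preparation-and-measurement procedure with the ``measured-edge-touches-$M$'' predicate yields an amplitude amplification loop that returns a genuine marked vertex with high probability after $O(\log|M|)$ iterations of the $O(\sqrt{{\sf R}_{\sigma,M}{\sf W}}\,{\sf U}_{\star})$-cost preparation. Multiplying the three logarithmic overheads — median amplification of phase estimation, the fraction-of-mass factor absorbed into amplitude amplification, and precision refinement needed to keep the flow-state approximation valid across that amplification — produces the claimed $\log^3(|M|)$ factor and the total cost $O({\sf S}+\sqrt{{\sf R}_{\sigma,M}{\sf W}}\log^3(|M|)\,{\sf U}_{\star})$. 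The main technical obstacle is controlling the second factor: one must lower-bound the probability that a measurement of (an approximation of) $\ket{\f}$ lands on an edge incident to $M$, and do so robustly under the $O(\varepsilon)$ perturbation coming from finite-precision phase estimation. This requires an operator-norm analysis showing that eigenvector-approximation errors propagate benignly through the reflections used inside amplitude amplification, which is exactly where the $\log^3$ — rather than a single $\log$ — ultimately accumulates.
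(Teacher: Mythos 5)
First, note that the paper does not prove this statement at all: \thm{find} is imported as a black box from Ref.~\cite{piddock2019quantum}, and the surrounding text explicitly defers the proof to that source. So the only question is whether your sketch would constitute a valid proof, and it would not: there is a genuine gap at its central step. You claim that after (approximately) preparing the electrical flow state $\ket{\f}$ of \eq{flowstate} and measuring in the edge basis, ``an $\Omega(1/\log|M|)$ fraction of the measurement mass lies on edges adjacent to $M$.'' This is false in general. The probability of observing an edge $\{u,v\}$ is $\f_{u,v}^2/(\w_{u,v}\,{\sf E}(\f))$, i.e.\ the fraction of the flow's \emph{energy} dissipated on that edge, and energy can be dissipated almost entirely far from $M$. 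For a path $s=v_0-v_1-\cdots-v_n=t$ with unit weights (or two such disjoint paths, to make $|M|=2$), the unit flow is $1$ on every edge, ${\sf E}(\f)=\Theta(n)$, and the mass on the edges incident to $M$ is $\Theta(1/n)$ --- independent of $|M|$ and arbitrarily smaller than $1/\log|M|$. Amplitude amplification on top of this would cost an extra factor $\Theta(\sqrt{n})$, destroying the claimed $O(\sqrt{{\sf R}_{\sigma,M}{\sf W}}\log^3|M|)$ bound. This is precisely why ``finding'' did not follow immediately from Belovs's detection algorithm and required separate work. (A smaller inconsistency: you amplitude-amplify around a procedure that includes the initial-state preparation, which requires reflecting about --- hence re-running --- that preparation in every iteration, so ${\sf S}$ could not appear purely additively as claimed.)

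The remaining logarithmic bookkeeping is also asserted rather than derived: the decomposition of $\log^3|M|$ into ``median amplification $\times$ mass fraction $\times$ precision refinement'' has no supporting argument, and since the middle factor rests on the false claim above, the accounting does not go through. The actual arguments in the literature avoid measuring the flow state directly for this reason: Piddock's proof (and the related approaches of Ambainis--Gily\'en--Jeffery--Kokainis and Apers--Gily\'en--Jeffery) instead either run a recursive search over candidate marked sets using the detection primitive of \thm{detect}, exploiting subadditivity of effective conductance over a partition of $M$, or coherently generate a state supported on the \emph{exit distribution} over $M$ itself (the distribution of where the unit of flow is absorbed), which by construction yields a marked vertex with probability $1$ upon measurement; the $\mathrm{polylog}(|M|)$ overhead arises from that recursion/state-generation, not from the mechanism you describe. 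If you want a provable route within this paper's toolkit, \thm{flow} tells you what measuring the flow state actually gives you --- and it is an edge sampled by energy contribution, not a marked vertex.
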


The next two algorithms consider the special case where $\sigma$ is supported on a single vertex and return information about the electrical network. Their complexities involve a quantity from~\cite{apers2022elfs} called the \emph{escape time} ${\sf ET}_{s,M}$:
\begin{equation}\label{eq:escape}
    {\sf ET}_{s,M} := \frac{1}{{\sf R}_{s,M}} \sum_{u\in V(G)}\p_u^2\w_u.
\end{equation}
Operationally, ${\sf ET}_{s,M}$ is the expected time at which a random walk leaves $s$ for the final time before arriving in $M$. Since the effective resistance forms a metric, $\p_u \le \p_s = {\sf R}_{s,M}$ for any $u \in V$, so ${\sf ET}_{s,M}$ lower-bounds ${\sf R}_{s,M}{\sf W}$. The third and fourth algorithms below admit generalisations via the \emph{alternative neighbourhood} technique of~\cite{jeffery2023multidimensional}, which we discuss in \sect{gibbs}. The first of the two estimates the effective resistance ${\sf R}_{s,M}$ (up to the weighted degree $\w_s$ of $s$).

\begin{theorem}[\cite{apers2022elfs}]\label{thm:effective}
    Fix a network $G = (V,E,\w)$ as in \defin{network}, a non-empty marked set $M \subset V$ and an initial vertex $s \in V \setminus M$, with the promise that there is a path in $G$ connecting $s$ to $M$. Then there exists a quantum walk algorithm that $\epsilon$-multiplicatively estimates ${\sf R}_{s,M}\w_s$ with a constant probability of success in cost
    $$O\!\left(\frac{1}{\epsilon}\left({\sf S} + \frac{1}{\epsilon}\left({\sf ET}_{s,M} + \log({\sf R}_{s,M}\w_s)\right){\sf U}_{\star}\right)\right).$$
\end{theorem}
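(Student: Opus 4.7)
The plan is to add an amplitude-estimation wrapper around the phase-estimation procedure underlying \thm{detect}, extracting a quantitative estimate of ${\sf R}_{s,M}\w_s$ rather than a boolean detect answer. The central link is that, with $\sigma$ supported on $\{s\}$ and $M\neq\emptyset$, a direct computation from \eq{state-init}, \eq{flowstate} and the identity ${\sf E}(\f)={\sf R}_{s,M}$ for the \sm electrical flow yields
\begin{equation*}
|\braket{\psi_0}{\f}|^{2} \;=\; \Theta\!\left(\frac{1}{{\sf R}_{s,M}\w_s}\right).
\end{equation*}
Since $\ket{\f}$ lies in $({\cal A}+{\cal B})^{\perp}$ and is a $(+1)$-eigenvector of $U_{\cal AB}$, while $\ket{\psi_0}\perp{\cal B}$, this overlap is exactly the probability that an exact phase estimation of $U_{\cal AB}$ applied to $\ket{\psi_0}$ outputs phase $0$. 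An $\epsilon$-multiplicative estimate of that probability, inverted, yields an $\epsilon$-multiplicative estimate of ${\sf R}_{s,M}\w_s$ with constant success probability.

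First I would package $\ket{\psi_0}$-preparation (cost ${\sf S}$) together with a phase-estimation subroutine for $U_{\cal AB}$ of precision $\delta$ (cost $(1/\delta){\sf U}_{\star}$) into a single base unitary $A$, and apply amplitude estimation to the projector onto the phase-$0$ register of $A$. Producing a relative-$\epsilon$ estimate of $|\braket{\psi_0}{\f}|^{2}$ requires $O(1/\epsilon)$ applications of $A$, which is the outer $1/\epsilon$ factor. The state-preparation cost ${\sf S}$ is incurred once per call to $A$, whereas the $\delta$-dependent phase-estimation depth is multiplied through the amplitude-estimation iterations, which is exactly the additive $({\sf S}+\text{phase-estimation depth})$ structure of the stated complexity.

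The key quantitative choice is the phase-estimation precision $\delta$. Naively, the spectral gap of $U_{\cal AB}$ at phase $0$ scales as $\Omega(1/\sqrt{{\sf R}_{s,M}{\sf W}})$, which is what \thm{detect} exploits. The crucial refinement from~\cite{apers2022elfs} is that the spectral weight of $\ket{\psi_0}$ carried on non-trivial eigenphases of $U_{\cal AB}$ is controlled by the escape time $\mathsf{ET}_{s,M}$ of \eq{escape}, rather than by ${\sf R}_{s,M}{\sf W}$; one may therefore take $\delta=\Theta(\epsilon/\sqrt{\mathsf{ET}_{s,M}})$, reproducing the inner $(1/\epsilon)(\mathsf{ET}_{s,M}+\log({\sf R}_{s,M}\w_s)){\sf U}_{\star}$ factor. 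The additive $\log({\sf R}_{s,M}\w_s)$ term arises from boosting the phase-estimation success probability so that its mis-identification rate lies below the target amplitude $1/\sqrt{{\sf R}_{s,M}\w_s}$, which is necessary for amplitude estimation to correctly resolve such a small amplitude.

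The hard part will be establishing the refined spectral bound with $\mathsf{ET}_{s,M}$ in place of ${\sf R}_{s,M}{\sf W}$. Concretely, one must express the non-$(+1)$ component of $\ket{\psi_0}$ under $U_{\cal AB}$-dynamics in the star-state basis, and then invoke Ohm's Law (\defin{ohm}) to identify the induced spectral mass with $\sum_{u\in V}\p_u^{2}\w_u/{\sf R}_{s,M}=\mathsf{ET}_{s,M}$. Once this identification is in place, combining it with a standard Chebyshev-polynomial analysis of amplitude estimation in the small-amplitude regime gives the claimed cost and constant success probability.
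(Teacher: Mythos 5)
Your high-level architecture (overlap $|\braket{\psi_0}{\f}|^{2}=\Theta(1/({\sf R}_{s,M}\w_s))$, amplitude estimation wrapped around phase estimation of $U_{\cal AB}$, escape time controlling the phase-estimation precision) matches the strategy the paper sketches for the generalisation \thm{effective-alt}, but two quantitative steps would fail as written. First, amplitude estimation returns an $\epsilon$-multiplicative estimate of a probability $p'$ using $O(1/(\epsilon\sqrt{p'}))$ invocations of the base unitary, not $O(1/\epsilon)$; since $p'=\Theta(1/({\sf R}_{s,M}\w_s))$ can be arbitrarily small, your outer loop count is unjustified. The missing idea is the rescaling step the paper points to explicitly: one first modifies the walk (reweighting at $s$, combined with a search over $O(\log({\sf R}_{s,M}\w_s))$ scales) so that ${\sf R}_{s,M}\w_s=O(1)$ and hence $p'=\Theta(1)$, after which $O(1/\epsilon)$ iterations suffice. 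This rescaling, not a success-probability boost of phase estimation, is what produces the additive $\log({\sf R}_{s,M}\w_s)$ term in the stated cost.

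Second, your precision choice $\delta=\Theta(\epsilon/\sqrt{\mathsf{ET}_{s,M}})$ gives a phase-estimation depth of $\sqrt{\mathsf{ET}_{s,M}}/\epsilon$, which does not reproduce the inner factor $\tfrac{1}{\epsilon}\,\mathsf{ET}_{s,M}\,{\sf U}_{\star}$ in the theorem (it would give a strictly better bound, which should have flagged the discrepancy). The $\sqrt{\mathsf{ET}_{s,M}}$ scaling belongs to the flow-state \emph{generation} task of \thm{flow}; for multiplicative \emph{estimation} one needs the spectral weight of $\ket{\psi_0}$ leaking onto nonzero eigenphases below the precision cutoff to be at most an $\epsilon$-fraction of $p'$ itself, which forces the finer precision $\delta=\Theta\bigl(\epsilon/(\mathsf{ET}_{s,M}\,{\sf R}_{s,M}\w_s)\bigr)$ used in the source, i.e.\ $\Theta(\epsilon/\mathsf{ET}_{s,M})$ after the rescaling above, and hence depth $\mathsf{ET}_{s,M}/\epsilon$. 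With those two corrections your outline coincides with the argument of~\cite{apers2022elfs} that the paper invokes.
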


The second generates an approximation of the flow state corresponding to the \sm electrical flow $\f$:

\begin{theorem}[\cite{apers2022elfs,jeffery2023quantum}]\label{thm:flow}
    Fix a network $G = (V,E,\w)$ as in \defin{network}, a non-empty marked set $M \subset V$ and an initial vertex $s \in V \setminus M$, with the promise that there is a path in $G$ connecting $s$ to $M$. Let $\f$ be the \sm electrical flow on $G$ with corresponding flow state $\ket{\f}$ as defined in \eq{flowstate}. Then there exists a quantum walk algorithm that returns a state $\ket{\widetilde{\f}}$ satisfying
    $$ \frac{1}{2}\norm{\proj{\widetilde{\f}} - \proj{\f}} \leq \epsilon$$
    with a constant probability of success in cost
    $$O\!\left({\sf S} + \frac{1}{\epsilon^2}\left(\sqrt{{\sf ET}_{s,M}} + \log({\sf R}_{s,M}\w_s)\right){\sf U}_{\star}\right).$$
\end{theorem}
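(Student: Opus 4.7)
The plan is to instantiate the electric network quantum walk machinery of \sect{qw} directly: prepare the initial state $\ket{\psi_0}$ from \eq{state-init}, apply phase estimation of the walk unitary $U_{\cal AB}$ from \eq{U-AB} to a suitably chosen precision, and post-select on measuring phase $0$. The resulting post-selected state coincides with $\Pi_{+1}\ket{\psi_0}/\|\Pi_{+1}\ket{\psi_0}\|$ up to an error controlled by the precision, so the task reduces to showing that this projection equals $\ket{\f}$ and bounding the cost of obtaining it within trace distance $\epsilon$. The overall structure mirrors \thm{effective}, but because one only needs to \emph{sample} from the $(+1)$-eigenspace rather than estimate its spectral weight, the dependence on $\mathsf{ET}_{s,M}$ should appear under a square root instead of linearly.

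The first technical step is a spectral-uniqueness claim: $\ket{\f}$ is, up to phase, the only $(+1)$-eigenvector of $U_{\cal AB}$ reachable from $\ket{\psi_0}$. Because $\ket{\f}\in ({\cal A}+{\cal B})^\perp$ by the key observation of \cite{belovs2013quantum} recalled in \sect{qw}, it is fixed by both $\Pi_{\cal A}$ and $\Pi_{\cal B}$ and is therefore a $(+1)$-eigenvector. Since $\ket{\psi_0}$ is orthogonal to ${\cal B}$ by construction, and the cyclic subspace generated by it under $U_{\cal AB}$ intersects $({\cal A}+{\cal B})^\perp$ in the one-dimensional span of the unique minimum-energy unit \sm flow, $\Pi_{+1}\ket{\psi_0}$ is a nonzero scalar multiple of $\ket{\f}$. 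A direct calculation from \eq{star-state}, \eq{state-init} and \eq{flowstate}, using $\sum_v \f_{s,v}=1$ and ${\sf E}(\f)={\sf R}_{s,M}$, yields the overlap $|\langle\psi_0|\f\rangle|^2 = 1/(\w_s{\sf R}_{s,M})$, confirming that the projection is nontrivial.

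The second technical step is the precision analysis. I would import from \cite{apers2022elfs,jeffery2023quantum} the spectral bound that the nonzero eigenphases of $U_{\cal AB}$ carrying weight from $\ket{\psi_0}$ are bounded away from $0$ by $\Omega(1/\sqrt{\mathsf{ET}_{s,M}})$---the escape-time refinement of the cruder Poincar\'e-type gap $1/\sqrt{{\sf R}_{s,M}{\sf W}}$. Standard analysis then says phase estimation to precision $\delta$ produces, on the phase-$0$ outcome, a state whose fidelity loss to $\ket{\f}$ is of order $\delta/\mathrm{gap}$; pushing this below $\epsilon$ together with the quadratic suppression of tail leakage yields $\delta$ of order $\epsilon^2/\sqrt{\mathsf{ET}_{s,M}}$, i.e., $O(\sqrt{\mathsf{ET}_{s,M}}/\epsilon^2)$ applications of $U_{\cal AB}$. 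A logarithmic-in-${\sf R}_{s,M}\w_s$ overhead absorbs the small post-selection probability via fixed-point amplification, producing the additive $\log({\sf R}_{s,M}\w_s)$ inside the parentheses. Accounting per application by ${\sf U}_\star$ and the initialisation of $\ket{\psi_0}$ by ${\sf S}$ then yields the claimed cost.

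The main obstacle I expect is the precision analysis underlying the third paragraph. Establishing the improved $\sqrt{\mathsf{ET}_{s,M}}$ scaling---rather than the cruder $\sqrt{{\sf R}_{s,M}{\sf W}}$ that follows from a bare-bones spectral-gap estimate---and pinning down the correct $1/\epsilon^2$ dependence both require showing that the spectral content of $\ket{\psi_0}$ away from phase $0$ concentrates on eigenphases of magnitude at least $1/\sqrt{\mathsf{ET}_{s,M}}$, which is the analytic core of the escape-time-based framework of \cite{apers2022elfs}. Once that input is available off-the-shelf, the remainder of the argument is essentially bookkeeping in phase-estimation error analysis.
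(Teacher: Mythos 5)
The paper does not prove this statement: \thm{flow} is imported verbatim from \cite{apers2022elfs,jeffery2023quantum}, with only the surrounding prose in \sect{qw} recalling the underlying phase-estimation picture. So there is no in-paper proof to compare against; what can be judged is whether your reconstruction matches the cited argument, and it essentially does. Your overlap computation $|\langle\psi_0|\f\rangle|^2=1/(\w_s{\sf R}_{s,M})$ is correct, and you rightly identify the escape-time spectral bound as the analytic core that must be taken off-the-shelf from \cite{apers2022elfs}. Two points deserve tightening. First, your ``spectral-uniqueness'' step is stated slightly wrong: $({\cal A}+{\cal B})^{\perp}$ is in general \emph{not} one-dimensional --- it contains the flow states of all circulations as well --- so you cannot argue via the dimension of that space. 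The correct statement is that $\Pi_{({\cal A}+{\cal B})^{\perp}}\ket{\psi_0}\propto\ket{\f}$ because every unit \sm flow state has the \emph{same} overlap with $\ket{\psi_0}$ (it depends only on the unit net flow out of $s$), so the projection is the flow state of minimal norm, i.e.\ minimal energy, i.e.\ the electrical flow; this is the standard argument in \cite{belovs2013quantum,piddock2019quantum} and is what actually delivers uniqueness. Second, you attribute the whole cost to phase estimation plus ``fixed-point amplification,'' but the $\log({\sf R}_{s,M}\w_s)$ term in the cited complexity arises from the standard edge-subdivision/rescaling trick that boosts the initial overlap to $\Omega(1)$ (so that ${\sf R}_{s,M}\w_s=O(1)$ after modification), not from amplifying a small post-selection probability; as written, amplifying a success probability of $1/(\w_s{\sf R}_{s,M})$ would cost a multiplicative $\sqrt{\w_s{\sf R}_{s,M}}$, not an additive logarithm. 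Neither issue is fatal --- both are repaired by the machinery you are already citing --- but as a self-contained argument the sketch would not close without them.
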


To run these algorithms we need bounds on ${\sf R}_{s,M}$, $\w_s$, ${\sf W}$ and ${\sf ET}_{s,M}$ in order to set the precision of the phase-estimation subroutine. If these are not known beforehand, the algorithms (and their corresponding complexities) can be instantiated with (trivial) upper bounds on these quantities.

\section{Representing an MAS as an electrical network}\label{sec:masg}

\subsection{The mass action system graph}

In this work, we represent an MAS as an electrical network, to allow us to run a quantum walk algorithm on this network. To create this electrical network, we base our approach on the bipartite graph representation by~\cite{Sakamoto1988GraphCRN}. In this bipartite graph representation, the set of species, \({\cal S}\), forms one partition of the vertices and the set of reactions, \({\cal R}\), forms the other. A directed edge, or \textit{arc}, is drawn between a species \(s \in {\cal S}\) and a reaction \(y \to y' \in {\cal R}\) if and only if species \(s\) occurs with non-zero stoichiometry in the reactant complex $y$, that is, \(y_s \neq 0\). Similarly, an arc is drawn between a reaction \(y \to y' \in {\cal R}\) and a species \(s \in {\cal S}\) when \(s\) occurs with non-zero stoichiometry in the product complex $y'$. Note that although the formal definition of a chemical reaction network (CRN) includes the set of complexes \( {\cal C} \), this representation does not. In many analyses, including our graph-based approach, the essential information provided by the complexes is fully encoded within the reactions themselves. Each reaction inherently describes the transformation from a set of reactant species to a set of product species. Therefore, by representing the CRN as a bipartite graph with one partition corresponding to the species \( {\cal S} \) and the other to the reactions \( {\cal R} \), we retain all critical stoichiometric information. Unlike other graph representations of CRNs, this bipartite graph ensures that each reactant is represented only once while remaining correctly connected to its associated reactions.

To represent an MAS as an electrical network, we are not allowed to have any directed edges (see \defin{network}), meaning that this approach will not suffice. In addition, the thermodynamic machinery discussed in \sect{thermo} also needs our MAS to have additional structure. We will therefore assume in the rest of this work that our MAS \(({\cal S}, {\cal C}, {\cal R}, k)\) satisfies the following three conditions:
\begin{enumerate}
\item The underlying CRN of our MAS is \textbf{reversible} (every reaction $y\to y'$ appears together with its backward reaction $y'\to y $ in ${\cal R}$), meaning we can fully characterise ${\cal R}$ by the set of oriented reactions $\R$.
\item The MAS admits a positive equilibrium concentrations vector $c^*$ that realises \textbf{detailed balance}, i.e.\ the forward and reverse fluxes of every reaction pair coincide at $c^*$.
\item Each reaction in ${\cal R}$ is \textbf{particle conserving}, meaning that for every $y \to y'\in {\cal R}$ we have $\sum_s y_s = \sum_s y'_s$.
\end{enumerate}
In practice, many mechanistic models violate at least one of these assumptions: combustion schemes contain intrinsically irreversible steps, metabolic networks include ``committed'' reactions, and numerous biological or catalytic processes are deliberately driven far from equilibrium, so that neither a detailed-balance concentration nor particle conservation holds exactly. Nevertheless, imposing these constraints is a standard assumption to simplify the analysis of chemical kinetics within the CRN framework. Additionally, we shall shortly see that under these assumptions, the dynamical properties of our MAS will give rise to a duality with electrical networks.

Under the assumptions of reversibility, detailed balance, and particle conservation, our bipartite graph representation is formally defined as follows:

\begin{definition}[Mass Action System Graph]\label{def:masg}
For an MAS \(({\cal S}, {\cal C}, {\cal R}, k)\) and a concentrations vector $c \in \mathbb{R}_{\geq 0}^{\cal S}$ its \textit{mass action system graph} (MASG) is a weighted bipartite undirected graph \(G = (V, \overline{E}, \w)\), where:
\begin{itemize}
    \item The vertex set \(V\) is the disjoint union of species and (reversible) reactions, given by \(V = {\cal S} \sqcup \R\), where \({\cal S}\) is the set of species and \(\R\) is the set of oriented reactions.
    \item The edge set \(E \subseteq \{\{s, r\} \mid s \in {\cal S}, r \in \R\}\) consists of edges connecting species to reactions. An edge \(\{s, r\}\) is part of $E$ if and only if $s$ appears with non-zero stoichiometry in $r$, i.e.\ at least one of $y_s,y'_s$ is non zero for $r = y \to y'$. The directed edge set $\E$ is obtained by considering the direction $(s,r)$ for every $\{s,r\} \in E$.
    \item The weight function \(\w: E \to \mathbb{R}_{ > 0}\) is defined on the edge set \(E\), where each edge $\{s, r\}$ is weighted by $\nu_r\abs{\nu_{r,s}}G_r$, where $\nu_r := \sum_{s \in {\cal S}}\abs{\nu_{r,s}}$.
\end{itemize}
\end{definition}

Note that $G$ is invariant under the choice of direction in $\R$, since $\nu_r = \nu_{\overline{r}}$. To obtain some intuition for the construction of MASG in \defin{masg}, we return to our example from \eq{example-crn}. We make some small modifications to the CRN to ensure that it is reversible and particle conserving. For readability, we also relabel the reactions as follows:
\begin{equation}\label{eq:example-balance}
\begin{split}
    A &\xrightleftharpoons[r_2]{r_1} B, \\
    A + C &\xrightleftharpoons[r_4]{r_3} 2D, \\
    D + 2B &\xrightleftharpoons[r_6]{r_5} 3E.
\end{split}
\end{equation}
The resulting MASG is shown in \fig{example-masg}.

\begin{figure}
\centering
\begin{tikzpicture}
    \node[draw, circle] (R1) at (6,8) {$r_1$};
    \node[draw, circle] (R2) at (6,5) {$r_3$};
    \node[draw, circle] (R3) at (6,2) {$r_5$};

    \node[draw, circle] (A) at (0,9) {A};
    \node[draw, circle] (B) at (0,7) {B};
    \node[draw, circle] (C) at (0,5) {C};
    \node[draw, circle] (D) at (0,3) {D};
    \node[draw, circle] (E) at (0,1) {E};
    \draw[->] (A) to (R1);
    \draw[->] (B) to (R1);
    \draw[->] (A) to (R2);
    \draw[->] (C) to (R2);
    \draw[->] (D) to (R2);
    \draw[->] (D) to (R3);
    \draw[->] (B) to (R3);
    \draw[->] (E) to (R3);
    \node at (1.5,9) {$2G_{r_1}$};
    \node at (1.5,8.4) {$4G_{r_3}$};
    \node at (1.5,7.5) {$2G_{r_1}$};
    \node at (1.5,6.3) {$12G_{r_5}$};
    \node at (1.5,5.25) {$4G_{r_3}$};
    \node at (1.5,3.85) {$8G_{r_3}$};
    \node at (1.5,3) {$6G_{r_5}$};
    \node at (1.5,1.6) {$18G_{r_5}$};
\end{tikzpicture}
\caption{The resulting MASG for the CRN in \eq{example-balance}.}
\label{fig:example-masg}
\end{figure}

\subsection{The flow on an MASG}

For an MAS \(({\cal S}, {\cal C}, {\cal R}, k)\), consider a marked set of species $M \subseteq {\cal S}$ and a probability distribution $\sigma$ on ${\cal S}\setminus M$. Let $c_{\sigma,M} \in \mathbb{R}_{\geq 0}^{\cal S}$ be the concentrations vector obtained by slightly perturbing the equilibrium $c^*$ through the addition of species, characterised by $\eta$ (see \eq{steady}), such that
\begin{align}\label{eq:eta}
    &\eta_s = \sigma(s), &&\sum_{s \in M}\eta_s = -1.
\end{align}
We exhibit a \ssm flow $\f$ on the MASG obtained from the MAS $({\cal S}, {\cal C}, {\cal R}, k)$ and concentrations vector $c_{\sigma,M}$. This flow $\f$ will hence have to satisfy Kirchhoff's Law (see \defin{kcl}), but it does not necessarily coincide with the actual \ssm electrical flow on the MASG, meaning it does not need to satisfy Ohm's Law (see \defin{ohm}).

\begin{lemma}\label{lem:masgflow}
For an MASG \(G = (V, E, \w)\) obtained from an MAS \(({\cal S}, {\cal C}, {\cal R}, k)\) and a concentrations vector $c \in \mathbb{R}_{\geq 0}^{\cal S}$, the following assignment of flow $\f$ to the edges $\E$ is a valid unit \ssm flow on $G$:
\begin{equation}\label{eq:masgflow}
    \f_{s,r} = \nu_{r,s}J_r(c_{\sigma,M}),
\end{equation}
where $J_r(c_{\sigma,M})$ is the net flux of the reaction $r$ under the concentrations vector $c_{\sigma,M}$ (see \eq{netflux}).
\end{lemma}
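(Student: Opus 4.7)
The plan is to verify the three defining properties of a unit $\sigma$-$M$ flow (see \defin{flow}) for the assignment $\f_{s,r}=\nu_{r,s}J_r(c_{\sigma,M})$: antisymmetry, conservation at internal vertices, and the correct injections/extractions at the sources (in $\textrm{supp}(\sigma)$) and sinks (in $M$). Antisymmetry is built in: we extend the definition on $\E$ to $E$ by setting $\f_{r,s}=-\f_{s,r}$, which is forced by the flow convention. The remaining two properties split naturally along the bipartition $V={\cal S}\sqcup\R$ of the MASG, so I would handle reaction vertices and species vertices separately.

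First I would check conservation at every reaction vertex $r\in\R$. Since the only edges incident to $r$ go to species $s\in{\cal S}$ with $\nu_{r,s}\neq 0$, one computes
\[
\sum_{s\in\N(r)}\f_{r,s} \;=\; -\sum_{s\in{\cal S}}\nu_{r,s}\,J_r(c_{\sigma,M})
\;=\; -J_r(c_{\sigma,M})\sum_{s\in{\cal S}}\nu_{r,s}.
\]
The sum $\sum_s\nu_{r,s}=\sum_s(y'_s-y_s)$ vanishes precisely because of the third structural assumption imposed on the MAS, namely that every reaction $r=y\to y'$ is particle conserving. So flow is conserved at every reaction vertex, regardless of the underlying concentrations.

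Next I would analyse species vertices. For any $s\in{\cal S}$, the net outflow at $s$ is
\[
\f_s \;=\; \sum_{r\in\N(s)}\f_{s,r} \;=\; \sum_{r\in\R}\nu_{r,s}\,J_r(c_{\sigma,M}),
\]
which is exactly the right-hand side of the steady-state equation \eq{steady} evaluated at $c_{\sigma,M}$, and therefore equals $\eta_s$. By the definition of $\eta$ in \eq{eta}, this gives $\f_s=\sigma(s)\geq 0$ for $s\in\textrm{supp}(\sigma)$ (so these act as sources with the required strength), $\sum_{s\in M}\f_s=-1$ for the sinks, and $\f_s=0$ at every other species vertex. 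Combined with the conservation at reaction vertices, this shows $\f$ is a unit $\sigma$-$M$ flow on $G$.

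The only non-routine ingredient is the cancellation $\sum_s\nu_{r,s}=0$, which is exactly where the particle-conservation assumption is used, so this is the step I would flag as the conceptual crux; everything else is bookkeeping against \eq{netflux}, \eq{steady}, and \eq{eta}. Notably, nowhere in the argument do I invoke Ohm's Law, which is consistent with the statement preceding the lemma that $\f$ need not be the electrical flow on the MASG.
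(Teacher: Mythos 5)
Your proposal is correct and follows essentially the same route as the paper: split the verification along the bipartition $V={\cal S}\sqcup\R$, use particle conservation ($\sum_s\nu_{r,s}=\sum_s(y'_s-y_s)=0$) for conservation at reaction vertices, and use the steady-state condition \eq{steady} together with \eq{eta} to identify the net outflow at each species vertex with $\eta_s$, yielding the source and sink conditions. The only differences are cosmetic (you make the antisymmetric extension from $\E$ to $E$ explicit, and you correctly identify the crux as the particle-conservation cancellation).
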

\begin{proof}
    To show that $\f$ is a unit \ssm flow on $G$, we must show that it satisfies Kirchhoff's Law, as well as that $\f_s = \sigma(s)$ for every $s$ in the support of $\sigma$ and $\sum_{s \in M}\f_s = -1$.

    We split Kirchhoff's Law into two parts tackling the vertices in ${\cal S}$ and $\R$ separately. Starting with $\R$, we know by assumption that our MAS is particle conserving. Hence, for every $r = y \to y' \in \R$, 
    \begin{align*}
    \sum_{s \in {\cal S}}\f_{r,s} &= -J_r(c_{\sigma,M})\sum_{s \in {\cal S}}\nu_{r,s}\\
    &= -J_r(c_{\sigma,M})\sum_{s \in {\cal S}}(y'_s - y_s) = 0.   
    \end{align*}
    
    We combine Kirchhoff's Law for the vertices in ${\cal S}$ with the other requirements, namely $\f_s = \sigma(s)$ for every $s$ in the support of $\sigma$ and $\sum_{s \in M}\f_s = -1$, since all of these follow from the fact that $c_{\sigma,M}$ satisfies the steady-state condition (see \eq{steady}), meaning that for every $s \in {\cal S}$ we have:
    \[
    \sum_{r \in \R} \f_{r,s} = \sum_{r \in \R} \nu_{r,s}J_{r}(c_{\sigma,M}) = \eta_s.
    \]
    By \eq{eta}, we know that $\eta_s = \sigma(s)$ for every $s \in {\cal S} \setminus M$ and that $\sum_{s \in M}\eta_s = -1$, which completes the proof.
\end{proof}

We will refer to the flow from \eq{masgflow} as the MASG flow. Now that we know that the MASG flow is a unit \ssm flow by \lem{masgflow}, we can calculate its energy ${\sf E}(\f)$:
\begin{equation}\label{eq:energy-masg}
\begin{split}
    {\sf E}(\f) &= \sum_{s \in {\cal S}}\sum_{r \in \R}\frac{\f_{s,r}^2}{\w_{s,r}} = \sum_{s \in {\cal S}}\sum_{r \in \R}\frac{\nu_{r,s}^2J_r(c_{\sigma,M})^2}{\nu_r\abs{\nu_{r,s}}G_r}\\
    &= \sum_{r \in \R}\frac{J_r(c_{\sigma,M})^2}{G_r}.
\end{split}
\end{equation}

Note that this precisely matches the instantaneous rate of Gibbs free-energy consumption $\Phi(c_{\sigma,M})$ from \eq{free-energy}.

In this section, we have shown that an MASG can be represented as an electrical network, such that its dynamics map directly to those of the electrical network, and hence to the quantum-walk parameters. \tabl{param} summarises this connection.

\begin{table*}
    \renewcommand{\arraystretch}{1.5}
    \centering
    \begin{tabular}{|c|c|}
        \hline
        MASG & Electrical network\\ \hline
        Species ${\cal S}$ & Vertices ${\cal S}$ forming an independent set \\ \hline
        Oriented reactions $\R$ & Vertices $\R$ forming an independent set \\ \hline
        Target species in ${\cal S}$ & Marked set $M \subseteq {\cal S}$ \\ \hline
        External injection/removal rate $\eta_s$ & Initial probability distribution $\sigma(s)$ \\ \hline
        Non-zero stoichiometric coefficient of $s \in {\cal S}$ in $r \in \R$ & Directed edge $(s,r)$ \\ \hline
        Thermodynamical quantity $\nu_r\abs{\nu_{r,s}}G_r$ & Edge conductance $\w_{s,r}$ \\ \hline
        Thermodynamical quantity $\nu_{r,s}J_r(c_{\sigma,M})$ & Valid unit \sm flow $\f_{s,r}$ \\ \hline
        Gibbs free-energy consumption $\Phi(c_{\sigma,M})$ & Energy ${\sf E}(\f)$ of the MASG flow \\ \hline
    \end{tabular}
    \caption{A summary of how our MASG representation connects the CRN and its dynamics to an electrical network.}\label{tab:param}
\end{table*}

\subsection{The computational cost}\label{sec:cost}

Using our electrical network representation of the MAS, we develop quantum walk algorithms to answer structural and kinetic questions about the underlying chemical reaction network. These constructions apply to MASs that are particle-conserving, reversible, and admit a detailed balance condition.

The asymptotic cost of these algorithms relies on the abstract costs from \sect{walk-algo}: ${\sf S}$, i.e.\ setting up the state $\ket{\sigma}$, and $U_{\star}$, i.e.\ the cost of implementing the quantum walk operator $U_{\cal{AB}}$ from \eq{U-AB}, or equivalently, the cost of implementing the star states $\ket{\psi_\star(u)}$. The states relevant to these costs are therefore
\begin{equation}\label{eq:queries}
\begin{split}
    &\ket{\sigma} = \sum_{s \in {\cal S} \setminus M}\sqrt{\eta_s}\ket{s},\\
    &\ket{\psi_\star(s)} \propto \sum_{r \in \R}\sqrt{\nu_r\abs{\nu_{r,s}}G_r}\ket{s,r},\\
    &\ket{\psi_\star(r)} = \frac{1}{\sqrt{\nu_r}}\sum_{s \in {\cal S}}\sqrt{\nu_{r,s}}\ket{r,s}.
\end{split}
\end{equation}
For typical applications of our framework it is reasonable to assume that these quantum states can be prepared efficiently. Because we focus on near-equilibrium dynamics (see \sect{thermo}), we assume that the MAS is initially close to a known steady state $c^*$ and that all thermodynamic quantities required with respect to this equilibrium can be pre-computed classically. Storing these values in read-only QRAM then allows efficient generation of the states in \eq{queries}.

\subsection{Detecting and finding reachable species}

We are now ready to describe our quantum algorithms for deciding whether a given set of target species is reachable, and for outputting a target species if this is the case.

\begin{corollary}\label{cor:detect-mas}
    Fix an MAS \(({\cal S}, {\cal C}, {\cal R}, k)\) that admits a detailed balance, that is particle conserving and whose underlying CRN is reversible. Fix an initial distribution of species $\sigma$ on ${\cal S}$ and a set of target species $M$, and let $\sf S$ be the cost of generating the state $\ket{\sigma}$. Then there exists a quantum walk algorithm that decides whether any of the target species in $M$ is reachable from $\sigma$ with a constant probability of success in cost
    \begin{align*}
        {\sf S} + \sqrt{\Phi(c_{\sigma,M})\sum_{r \in \R}\nu_r^2G_r}.
    \end{align*}
\end{corollary}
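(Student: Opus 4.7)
The plan is to apply \thm{detect} directly to the MASG $G = (V,E,\w)$ produced by \defin{masg}, and then to translate the abstract graph-theoretic parameters ${\sf R}_{\sigma,M}$ and ${\sf W}$ appearing in its cost bound into the thermodynamic quantities $\Phi(c_{\sigma,M})$ and $\sum_{r \in \R}\nu_r^2 G_r$, using the preparatory results of \lem{masgflow} and \eq{energy-masg}.

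First, I would verify the promise of \thm{detect}. When a perturbed steady-state $c_{\sigma,M}$ exists, \lem{masgflow} produces a unit \ssm flow $\f$ on $G$ whose support certifies a path from $\mathrm{supp}(\sigma)$ to $M$, so the promise is satisfied; when $M$ is unreachable the algorithm can still be invoked with any a priori upper bound on ${\sf R}_{\sigma,M}$ used to set the phase-estimation precision. Applying \thm{detect} therefore yields an algorithm that decides reachability in cost $O\bigl({\sf S} + \sqrt{{\sf R}_{\sigma,M}{\sf W}}\,{\sf U}_\star\bigr)$.

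Next, I would bound the two remaining parameters. Since effective resistance is the minimum energy over all unit \ssm flows (\defin{flow}) and the MASG flow $\f$ is such a flow, the energy computation \eq{energy-masg} combined with \eq{free-energy} gives
\[
{\sf R}_{\sigma,M} \le {\sf E}(\f) = \sum_{r\in\R}\frac{J_r(c_{\sigma,M})^2}{G_r} = \Phi(c_{\sigma,M}).
\]
The total weight is obtained by expanding the edge weights from \defin{masg} and using $\nu_r = \sum_{s\in{\cal S}}\abs{\nu_{r,s}}$:
\[
{\sf W} = \sum_{(s,r)\in\E}\nu_r\abs{\nu_{r,s}}G_r = \sum_{r\in\R}\nu_r G_r\sum_{s\in{\cal S}}\abs{\nu_{r,s}} = \sum_{r\in\R}\nu_r^2 G_r.
\]

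Finally, I would appeal to the QRAM-based access model introduced in \sect{cost}, under which each star state in \eq{queries} is prepared in cost ${\sf U}_\star = O(1)$. Substituting the two bounds above into the complexity of \thm{detect} then yields the claimed cost ${\sf S} + \sqrt{\Phi(c_{\sigma,M})\sum_{r\in\R}\nu_r^2 G_r}$. I do not anticipate any real obstacle: once \lem{masgflow} identifies the MASG flow as a feasible \ssm flow whose energy is precisely the Gibbs dissipation rate, the proof is essentially a one-line substitution, and the mildly subtle point about the promise of \thm{detect} is already handled by the existence of $\f$ in the reachable case.
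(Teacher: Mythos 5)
Your proposal is correct and follows essentially the same route as the paper: invoke \thm{detect} on the MASG, compute ${\sf W} = \sum_{r\in\R}\nu_r^2 G_r$ from the weight assignment, and bound ${\sf R}_{\sigma,M}$ by the energy of the MASG flow, which equals $\Phi(c_{\sigma,M})$ by \eq{energy-masg}. Your additional remarks on the reachability promise and on absorbing ${\sf U}_\star$ via the QRAM access model are sensible points the paper leaves implicit, but they do not change the argument.
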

\begin{proof}
    We invoke \thm{detect} to obtain a quantum walk that runs on the MASG obtained from our MAS. By our choice of weights for the MASG, we know that
    $$ {\sf W} = \sum_{r \in \R}\sum_{s \in {\cal S}} \nu_r \abs{\nu_{r,s}} G_r = \sum_{r \in \R}\nu_r^2G_r.$$
    Lastly, we know by \eq{energy-masg} that the energy of the MASG flow allows us to upper bound the effective resistance as follows:
    \begin{equation*} {\sf R}_{\sigma,M} \leq {\sf E}(\f) = \Phi(c_{\sigma,M}). \qedhere\end{equation*}
\end{proof}

\begin{corollary}\label{cor:find-mas}
    Fix an MAS \(({\cal S}, {\cal C}, {\cal R}, k)\) that admits a detailed balance, that is particle conserving and whose underlying CRN is reversible. Fix an initial distribution of species $\sigma$ on ${\cal S}$ and a non-empty set of target species $M$, reachable from $\sigma$. Then there exists a quantum walk algorithm that returns any of the target species from $M$ with a constant probability of success in cost
    \begin{align*}
        &{\sf S} + \sqrt{\Phi(c_{\sigma,M})\sum_{r \in \R}\nu_r^2G_r}\log^3(\abs{M}).
    \end{align*}
\end{corollary}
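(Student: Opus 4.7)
The plan is to mirror the proof of \cor{detect-mas} essentially verbatim, swapping out the underlying quantum walk primitive. That is, I would invoke \thm{find} instead of \thm{detect} on the mass action system graph (MASG) \(G = (V,E,\w)\) obtained from the MAS \(({\cal S}, {\cal C}, {\cal R}, k)\) via \defin{masg}, with the marked set taken to be the target species set \(M \subseteq {\cal S}\) and the initial distribution taken to be \(\sigma\). The promise of \thm{find} that \(M\) is non-empty and reachable from \(\sigma\) in \(G\) is immediate from the hypothesis of the corollary.

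Next I would recover the three complexity parameters in the bound of \thm{find}. The setup cost \({\sf S}\) is, by definition, the cost of preparing \(\ket{\sigma}\) as given in \eq{queries}, and passes through unchanged. The total weight \({\sf W}\) of the MASG is computed exactly as in the proof of \cor{detect-mas}: summing \(\w_{s,r} = \nu_r \abs{\nu_{r,s}} G_r\) over all edges \(\{s,r\}\in E\) and using that \(\sum_{s\in {\cal S}}\abs{\nu_{r,s}} = \nu_r\) yields
\[
{\sf W} = \sum_{r\in\R}\nu_r^2 G_r.
\]
Finally, by \lem{masgflow} the MASG flow \(\f\) defined in \eq{masgflow} is a valid unit \ssm flow, so the effective resistance \({\sf R}_{\sigma,M}\) is upper bounded by \({\sf E}(\f)\), which we computed in \eq{energy-masg} to be exactly \(\Phi(c_{\sigma,M})\). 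Substituting these identifications into the cost of \thm{find} produces the advertised bound \({\sf S} + \sqrt{\Phi(c_{\sigma,M})\sum_{r\in\R}\nu_r^2 G_r}\log^3(|M|)\), where the \({\sf U}_\star\) factor is absorbed under the QRAM-based cost model discussed in \sect{cost} that renders star-state preparation \(O(1)\).

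There is no genuine obstacle here: the work has already been done in \lem{masgflow}, \eq{energy-masg}, and the proof of \cor{detect-mas}. The only point to flag is making sure the promise required by \thm{find} (reachability of \(M\) from the support of \(\sigma\) via a path in \(G\)) is consistent with reachability in the chemical sense under the perturbation \(\eta\); this follows because the MASG flow from \lem{masgflow} has strictly positive net flow out of \(\mathrm{supp}(\sigma)\) and strictly negative net flow into \(M\), which forces the existence of an undirected path in \(G\) connecting them.
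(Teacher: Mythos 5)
Your proposal is correct and matches the paper's proof, which is literally the one-liner ``this follows directly from \thm{find}, by applying the same parameters as in \cor{detect-mas}'' — i.e.\ the same identifications ${\sf W} = \sum_{r\in\R}\nu_r^2 G_r$ and ${\sf R}_{\sigma,M}\leq {\sf E}(\f)=\Phi(c_{\sigma,M})$ via \lem{masgflow} and \eq{energy-masg}. Your extra remark reconciling the reachability promise with the existence of a path in the MASG is a harmless elaboration the paper leaves implicit (the hypothesis already grants reachability).
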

\begin{proof}
    This follows directly from \thm{find}, by applying the same parameters as in \cor{detect-mas}.
\end{proof}

\subsection{Worked example: a five-species reaction network}\label{sec:example}

We instantiate the framework on the five-species CRN already introduced in \eq{example-balance} and \fig{example-masg}. To make the
algorithmic costs concrete, we set the Onsager coefficients
$G_{r_1} = G_{r_3} = G_{r_5} = 1$ (in any consistent unit system) and
consider the perturbation
\begin{equation}
\sigma(E) = 1, \qquad M = \{A, C\},
\label{eq:example-perturbation}
\end{equation}
i.e.\ a unit injection of species $E$ together with extraction at
species $A$ and $C$. Physically, this is a controlled probe of the
reverse pathway $E \to A,C$.

\textbf{Steady-state fluxes.} Imposing Kirchhoff's law at every species node of the MASG fixes the
mass-action fluxes uniquely:
\begin{equation}
J_{r_5} = \tfrac{1}{3}, \qquad
J_{r_1} = \tfrac{2}{3}, \qquad
J_{r_3} = \tfrac{1}{6},
\label{eq:example-fluxes}
\end{equation}
and the sink decomposition $\eta_A = -\tfrac{5}{6}$, $\eta_C =
-\tfrac{1}{6}$. The corresponding edge flow $\theta_{s,r} = \nu_{r,s}
J_r$ on the MASG is shown in \fig{example-flow}.

\begin{figure}
\centering
\begin{tikzpicture}[scale=0.85]
    \node[draw, circle] (R1) at (6,8) {$r_1$};
    \node[draw, circle] (R2) at (6,5) {$r_3$};
    \node[draw, circle] (R3) at (6,2) {$r_5$};

    \node[draw, circle] (A) at (0,9) {A};
    \node[draw, circle] (B) at (0,7) {B};
    \node[draw, circle] (C) at (0,5) {C};
    \node[draw, circle] (D) at (0,3) {D};
    \node[draw, circle] (E) at (0,1) {E};

    \draw[->] (A) to (R1);
    \draw[->] (B) to (R1);
    \draw[->] (A) to (R2);
    \draw[->] (C) to (R2);
    \draw[->] (D) to (R2);
    \draw[->] (D) to (R3);
    \draw[->] (B) to (R3);
    \draw[->] (E) to (R3);

    \node at (1.5,9.15)    {$-\tfrac{2}{3}$};   
    \node at (1.5,8.35)  {$-\tfrac{1}{6}$};   
    \node at (1.3,7.6)  {$+\tfrac{2}{3}$};   
    \node at (1.3,6.3)  {$-\tfrac{2}{3}$};   
    \node at (1.3,5.35) {$-\tfrac{1}{6}$};   
    \node at (1.3,3.9) {$+\tfrac{1}{3}$};   
    \node at (1.3,3.1)    {$-\tfrac{1}{3}$};   
    \node at (1.3,1.6)  {$+1$};              

    \node[anchor=west] at (6.6,8) {$J_{r_1}=\tfrac{2}{3}$};
    \node[anchor=west] at (6.6,5) {$J_{r_3}=\tfrac{1}{6}$};
    \node[anchor=west] at (6.6,2) {$J_{r_5}=\tfrac{1}{3}$};

\end{tikzpicture}
\caption{The MASG flow $\f_{s,r} = \nu_{r,s} J_r(c_{\sigma,M})$ induced on the network of \fig{example-masg} by the perturbation $\sigma(E) = 1$, $M = \{A, C\}$ of \eq{example-perturbation}. Edge labels give the flow $\f_{s,r}$ along the directed edge $(s, r)$ (positive when flow runs from $s$ into reaction $r$, negative otherwise); the net reaction fluxes $J_{r_1}, J_{r_3}, J_{r_5}$ from \eq{example-fluxes} are annotated next to the reaction vertices. The MASG is $\sigma$--$M$ rigid, so these values are uniquely determined by Kirchhoff's law at the species vertices together with the stoichiometric ratios $\nu_{r,s}$ at the reaction vertices.}
\label{fig:example-flow}
\end{figure}

\textbf{Network parameters.} Substituting the fluxes into \eq{energy-masg}, the total
Gibbs free-energy consumption is
\begin{equation}
\Phi(c_{\sigma,M})
= \sum_{r \in \overrightarrow{{\cal R}}} \frac{J_r^2}{G_r}
= \tfrac{4}{9} + \tfrac{1}{36} + \tfrac{1}{9}
= \tfrac{7}{12} \approx 0.583,
\end{equation}
and the total network weight is
\begin{equation}
{\sf W} = \sum_{r \in \overrightarrow{{\cal R}}} \nu_r^2\, G_r
= 4 + 16 + 36 = 56.
\end{equation}
Since the perturbation $\sigma$--$M$ on this CRN is $\sigma$--$M$
rigid (the linear system at each reaction node has a unique solution
once stoichiometric ratios are imposed; see \sect{alt}), the
alternative effective resistance saturates the upper bound
${\sf R}_{\sigma,M}^{\sf alt} = {\sf E}(\theta) = \Phi(c_{\sigma,M})$, giving
\begin{equation}
\sqrt{{\sf R}_{\sigma,M}^{\sf alt}\,{\sf W}}
\;=\; \sqrt{\Phi \cdot {\sf W}}
\;=\; \sqrt{\tfrac{98}{3}} \;\approx\; 5.72.
\end{equation}

\textbf{Algorithmic cost.} Plugging into the cost expressions of \cor{detect-mas}, \cor{find-mas}, and \cor{approx-mas}, the four primitives evaluate
on this instance to the complexities summarised in \tabl{costs}.
\begin{table*}[t]
\centering
\caption{Quantum and classical costs for the main tasks.}
\label{tab:costs}
\begin{tabular}{lll}
\toprule
Task & Quantum cost & Classical baseline \\
\midrule
Reachability & $O({\sf S} + 5.72\,{\sf U}_\star)$
             & $\Omega(n^2)$ for $n = |\mathcal{S}| + |\overrightarrow{\mathcal{R}}| = 8$\\
Sampling     & $O({\sf S} + 5.72\,{\sf U}_\star \log^3|M|)$
             & --- \\
$\Phi$ estimation & $O\bigl(\tfrac{1}{\epsilon}({\sf S}
                  + \tfrac{1}{\epsilon}({\sf ET}^{\sf alt} + \log({\sf R}^{\sf alt}{\sf w}_s)){\sf U}_\star^{\sf alt})\bigr)$
                  & --- \\
Flow-state $\ket{\theta}$ & $O\bigl({\sf S} + \tfrac{1}{\epsilon^2}\bigl(\sqrt{{\sf ET}^{\sf alt}}
                  + \log({\sf R}^{\sf alt}{\sf w}_s)\bigr){\sf U}_\star^{\sf alt}\bigr)$ & --- \\
\bottomrule
\end{tabular}
\end{table*}

For this CRN the dissipation-aware bound $\sqrt{\Phi\,{\sf W}}\approx 5.72$ is
substantially smaller than the generic graph-size bound
$O(n^{3/2}) = O(8^{3/2}) \approx 22.6$ that the same algorithm would
incur if it ignored the chemistry. The gap is governed entirely by the
ratio $\Phi(c_{\sigma,M})/n$ — i.e.\ by how concentrated the
perturbation is — and grows as the CRN becomes larger and the
perturbation remains localised. This is the regime in which our
framework is most useful.

\section{Algorithms for the Gibbs free-energy consumption}\label{sec:gibbs}

At a first glance, one might hope to approximate the Gibbs free energy consumption $\Phi(c_{s,M})$ using \thm{effective}, since \eq{energy-masg} guarantees that for the MASG flow we have
$$ {\sf R}_{s,M} \leq {\sf E}(\f) = \Phi(c_{s,M}).$$
However, the flow $\f$ constructed in \eq{energy-masg} does not, in general, coincide with the true $\sigma$-$M$ electrical flow. As a result, we typically have ${\sf R}_{s,M} < \Phi(c_{s,M})$. Consequently, applying \thm{effective} yields only an approximation to a lower bound on $\Phi(c_{s,M})$, and the tightness of this bound is unclear. In this section, we show how we resolve this problem in certain cases.

\subsection{Multidimensional electrical networks}

Ref.~\cite{jeffery2023multidimensional} introduces the multidimensional quantum walk framework to generalise the standard quantum walk framework, later further formalised in~\cite{jeffery2025multidimensional}. A key innovation in this generalisation is the introduction of \textit{alternative neighbourhoods}.

\begin{definition}[Alternative Neighbourhoods]\label{def:alternative}
	For a network $G= (V,E,\w)$, as in \defin{network}, a set of \emph{alternative neighbourhoods} is a collection of states:
	$$\Psi_\star=\{\Psi_\star(u)\subset\mathrm{span}\{\ket{u,v}:v\in \N(u)\}: u\in V\}$$
	such that for all $u\in V$, $\ket{\psi_\star(u)} \in \Psi_\star(u)$ (see \eq{star-state}).
	We view the states of $\Psi_\star(u)$ as different possibilities for $\ket{\psi_\star(u)}$, only one of which is ``correct''.
\end{definition}

By adding additional alternative neighbourhoods to the set $\Psi_{\cal A}$ spanning ${\cal A}$ (see \eq{star-space}), we obtain $\Psi_{\Aalt}$. In the rest of this section, we assume that each $\Psi(u)$ is a normalised basis $\{\ket{\psi_{u,0}} = \ket{\psi_\star(u)},\dots,\ket{\psi_{u,a_u-1}}\}$, so
$$ \Psi_{\Aalt} = \{\ket{{\psi}_{u,i}}: u \in V \backslash \left(\{s\} \cup M\right), i \leq a_u-1\}.$$
Through the addition of these alternative neighbourhoods, we obtain the modified quantum walk operator
\begin{equation}\label{eq:walk-alt}
    U_{\Aalt\B}=(2\Pi_{\Aalt}-I)(2\Pi_{{\cal B}}-I),
\end{equation}
where $\Pi_{\Aalt}$ denotes the orthogonal projector onto the modified star space $\Aalt$.

Incorporating these additional neighbourhoods into $\Psi_\star(u)$ alters the quantum walk operator, and thus the complexity of its implementation. This was central to their original introduction in~\cite{jeffery2023multidimensional}, as these constructions are necessary in applications where it is computationally much more efficient to generate the set $\Psi_\star(u)$ rather than the individual star state $\ket{\psi_\star(u)}$. For example, this situation arises when $\ket{\psi_\star(u)}$ is known to be one of a small collection of easily preparable states within $\Psi_\star(u)$, but it is computationally infeasible to determine precisely which one. We denote the cost of implementing $U_{\Aalt\B}$ by ${\sf U}_{\star}^{\sf alt}$.

It was later shown in~\cite{li2025multidimensional} that the addition of alternative neighbourhoods structure in this way gives rise to the $\sigma$-$M$ alternative electrical flow associated with $\Psi_\star$. This flow satisfies the \emph{Alternative Kirchhoff's Law} and \emph{Alternative Ohm's Law}, which generalise \defin{kcl} and \defin{ohm}, respectively.

\begin{definition}[Kirchhoff's Alternative Law]\label{def:kcl-alt}
	For any \sm alternative flow $\falt$ with respect to a collection of alternative neighbourhoods $\Psi_{\star}$ on an electrical network $G = (V,E,\w)$, the corresponding flow state $\ket{\falt}$ is orthogonal to $\mathrm{span}(\Psi_{\star}(u))$ for every $u \in V \backslash \{\{s\}\cup M\}$, that is, $\langle {\psi}_{u,i} |\f \rangle =0$ for each $i\in \{0,1,\cdots,a_u-1\}$.
\end{definition}

\begin{definition}[Alternative Electrical Flow]\label{def:flow-alt}
	For a collection of alternative neighbourhoods $\Psi_{\star}$ on an electrical network $G = (V,E,\w)$, the \emph{\sm alternative electrical flow} is the unique alternative unit \sm flow $\falt$ with minimal energy ${\sf E}(\falt)$. We call this minimal energy the \emph{alternative effective resistance} ${\sf R}_{s,t}^{\sf alt}$.
\end{definition}

\begin{definition}[Alternative Ohm's Law]\label{def:ohm-alt}
	Let $\falt$ be the \sm alternative electrical flow with respect to a collection of alternative neighbourhoods $\Psi_{\star}$ on an electrical network $G = (V,E,\w)$. Then there exists an \textit{alternative potential vector} $\palt$ corresponding to $\falt$, which assigns a unique potential $\palt_{u,v}$ to each edge $(u,v) \in E$ such that $\palt_{s,v} = {\sf R}_{s,t}^{\sf alt}$ for every $v \in \N(s)$ and $\palt_{u,v} = 0$ for every $u \in M$ and $v \in \N(u)$. Additionally, the associated state $\ket{\palt}$ (see (16) in~\cite{li2025multidimensional}) satisfies $\Pi_{\Aalt}\ket{\palt} = \ket{\palt}$ and the potential difference between $(u,v)$ and $(v,u)$ is equal to the amount of electrical flow $\falt_{u,v}$ along $(u,v)$ multiplied with the resistance $1/\w_{u,v}$, that is, $\palt_{u,v} - \palt_{v,u} = \falt_{u,v}/\w_{u,v}$.
\end{definition}

The alternative potential $\palt$ from \defin{ohm-alt} gives rise to an alternative escape time, generalising \eq{escape}:
\begin{equation}\label{eq:escape-alt}
	\mathsf{ET}^{\sf alt}_{s,M} := \frac{1}{{\sf R}_{s,M}^{\sf alt}} \sum_{{u,v} \in E}\p_{u,v}^2\w_{u,v}.
\end{equation}
By using these ``alternative'' electrical-network definitions, Ref.~\cite{li2025multidimensional} shows that we can approximate the alternative electrical flow, generalising \thm{flow} to alternative neighbourhoods:

\begin{theorem}[\cite{li2025multidimensional}]\label{thm:flow-alt}
	Fix a network $G = (V,E,\w)$ as in \defin{network}, a collection of alternative neighbourhoods $\Psi_{\star}$, a non-empty marked set $M \subset V$ and an initial vertex $s \in V \setminus M$, with the promise that there is a path in $G$ connecting $s$ to $M$. Let $\falt$ be the \sm alternative electrical flow on $G$ with corresponding flow state $\ket{\falt}$ as defined in \eq{flowstate}. Then there exists a quantum walk algorithm that returns a state $\ket{\widetilde{\f}}$ satisfying
	$$ \frac{1}{2}\norm{\proj{\widetilde{\f}} - \proj{\falt}} \leq \epsilon$$
    with a constant probability of success in cost
    $$O\left({\sf S} + \frac{1}{\epsilon^2}\left(\sqrt{\mathsf{ET}_{s,M}^{\sf alt}} + \log({\sf R}_{s,M}^{\sf alt}\w_s) \right){\sf U}_{\star}^{{\sf alt}}\right).$$
\end{theorem}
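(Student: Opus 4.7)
The plan is to mimic the proof of Theorem \ref{thm:flow} from \cite{apers2022elfs, jeffery2023quantum}, but with the alternative star space $\Aalt$ replacing ${\cal A}$ throughout, so that the modified walk operator $U_{\Aalt\B}$ from \eqref{eq:walk-alt} takes the role of $U_{\cal AB}$. As in \sect{qw}, the algorithm will perform phase estimation of $U_{\Aalt\B}$ on an initial state $\ket{\psi_0}=\sqrt{2}(I-\Pi_{\B})\ket{\psi_\star(s)}$, then project (via post-selection on a trivial phase measurement) onto $({\cal A}^{\sf alt}\cap \B)\oplus(\Aalt+\B)^{\perp}$. Since $\ket{\psi_0}\in\B^{\perp}$ by construction, the output is a state proportional to the $(\Aalt+\B)^{\perp}$-component of $\ket{\psi_0}$, which we will argue is essentially $\ket{\falt}$.

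First, I would verify that $\ket{\falt}\in(\Aalt+\B)^{\perp}$. By \eq{flowstate}, any flow state lies in $\B^{\perp}$. Kirchhoff's Alternative Law (\defin{kcl-alt}) says exactly that $\ket{\falt}\perp\ket{\psi_{u,i}}$ for each alternative basis vector at every non-boundary $u$, i.e.\ $\ket{\falt}\perp\Aalt$, so together $\ket{\falt}\in(\Aalt+\B)^{\perp}$ and is hence a $(+1)$-eigenvector of $U_{\Aalt\B}$. A direct calculation using \eq{state-init}, \eq{star-state}, and the explicit form of $\ket{\falt}$ then gives the overlap $|\langle\psi_0|\falt\rangle|=\Theta(\sqrt{\w_s/{\sf R}_{s,M}^{\sf alt}})$, exactly as in the non-alternative case, because only the source-$s$ component of $\falt$ contributes.

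The main obstacle, and the step that genuinely uses the alternative formalism, is bounding the effective spectral gap of $U_{\Aalt\B}$ around $+1$ on the orthogonal complement of the witness space. Here the alternative potential state $\ket{\palt}$ of \defin{ohm-alt} plays the role that the standard potential plays in \thm{flow}: because $\Pi_{\Aalt}\ket{\palt}=\ket{\palt}$ and the potential-flow relation $\palt_{u,v}-\palt_{v,u}=\falt_{u,v}/\w_{u,v}$ still holds on every edge, $\ket{\palt}$ serves as a positive witness for an effective-spectral-gap argument à la \cite{jeffery2025multidimensional, li2025multidimensional}. Its squared norm is precisely $\sum_{\{u,v\}\in E}(\palt_{u,v})^{2}\w_{u,v}={\sf R}_{s,M}^{\sf alt}\cdot\mathsf{ET}_{s,M}^{\sf alt}$, so the effective spectral gap argument shows that phase estimation to precision $\Theta(1/\sqrt{\mathsf{ET}_{s,M}^{\sf alt}})$ suffices to separate the $\ket{\falt}$ component from the rest of $\ket{\psi_0}$, with an additive $\log({\sf R}_{s,M}^{\sf alt}\w_s)$ overhead absorbing the boundary scaling between the small overlap $|\langle\psi_0|\falt\rangle|$ and the normalisation of $\ket{\falt}$.

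Finally, I would assemble the cost: preparing $\ket{\psi_0}$ costs $O({\sf S})$; one call to $U_{\Aalt\B}$ costs $O({\sf U}_{\star}^{\sf alt})$; phase estimation at the precision above uses $O\bigl(\sqrt{\mathsf{ET}_{s,M}^{\sf alt}}+\log({\sf R}_{s,M}^{\sf alt}\w_s)\bigr)$ such calls; and the $1/\epsilon^{2}$ factor comes from converting the $\ell_{2}$-error on the post-selected state into a trace-distance bound via amplitude-estimation–style boosting, as in \cite{apers2022elfs}. Summing yields the stated complexity. The genuinely new content is the second step — justifying that $\ket{\palt}$ still witnesses a gap of the right size when the walk is driven by $U_{\Aalt\B}$ rather than $U_{\cal AB}$ — while the rest is a faithful transcription of the standard electrical-flow quantum-walk argument into the multidimensional setting.
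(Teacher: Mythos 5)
The paper offers no proof of \thm{flow-alt}: it is imported wholesale from \cite{li2025multidimensional} as a black box, so there is no in-paper argument to compare yours against. That said, your reconstruction follows exactly the strategy the authors themselves sketch for the sibling result \thm{effective-alt} (transplant the proof of \thm{flow} from \cite{apers2022elfs}, replace ${\cal A}$ by $\Aalt$, and use the alternative potential state $\ket{\palt}$ of \defin{ohm-alt} as the effective-spectral-gap witness, whose squared norm is ${\sf R}_{s,M}^{\sf alt}\cdot\mathsf{ET}_{s,M}^{\sf alt}$ by \eq{escape-alt}), so it is a faithful account of how the cited result is obtained. One small slip: the overlap of $\ket{\psi_0}$ with the flow state is $\Theta\bigl(1/\sqrt{\w_s\,{\sf R}_{s,M}^{\sf alt}}\bigr)$, not $\Theta\bigl(\sqrt{\w_s/{\sf R}_{s,M}^{\sf alt}}\bigr)$ — this is what makes the $\log({\sf R}_{s,M}^{\sf alt}\w_s)$ term the right boundary-rescaling overhead — but this does not affect the structure of your argument.
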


\subsection{Using alternative neighbourhoods to restrict the flow}\label{sec:alt}

The technique of alternative neighbourhoods has previously been employed in~\cite{jeffery2023multidimensional, li2025multidimensional, jeffery2025multidimensional} to address cases where it is computationally easier to generate $\Psi_{\star}(u)$, which includes additional alternative neighbourhoods, than to construct the star state $\ket{\psi_{\star}(u)}$ directly. In this section, we exhibit a novel use of this technique: we show how alternative neighbourhoods can be leveraged to approximate the Gibbs free-energy consumption, and to prepare a quantum state that is a superposition over all species--reaction pairs $(r, s)$, weighted by their contribution to the total Gibbs free-energy consumption.

Our approach constructs a collection of alternative neighbourhoods $\Psi_{\star}$ such that the MASG flow defined in \eq{masgflow} is the unique \sm flow satisfying Alternative Kirchhoff's Law (see \defin{kcl-alt}). By construction (see \defin{flow-alt}), this guarantees that the MASG flow is also the alternative \sm\ electrical flow with respect to $\Psi_{\star}$. This reverses the usual causal relationship between alternative neighbourhoods and the corresponding electrical flow: rather than beginning with a prescribed set $\Psi_{\star}$ and checking which flows satisfy Alternative Kirchhoff's Law, we start from the desired flow and reverse-engineer the alternative neighbourhoods $\Psi_{\star}$ so that this flow becomes the alternative \sm electrical flow relative to $\Psi_{\star}$. A worked example illustrating this construction on a 3-species CRN, including the explicit alternative neighbourhood at a single reaction vertex and a comparison between the MASG flow and the (different) minimum-energy electrical flow, is provided in \app{alt-example}.

We now describe the general construction. We start with $r \in \R$, where the normalised projection of $\ket{\f}$ (where $\f$ is our MASG flow) onto $\textrm{span}\{\ket{r,s}: s \in {\cal S}\}$ is equal to
\begin{equation}\label{eq:flow-r}
    \ket{\f_r} \propto \sum_{s \in {\cal S}}\nu_{r,s}\ket{r,s}.
\end{equation}
As discussed in \sect{cost}, it is reasonable to assume that this state can be prepared efficiently in practical settings: the ratios of stoichiometric coefficients for the species involved in each reaction can be pre-computed classically and stored in QRAM.

In principle, performing the same preparation for every $s \in \mathcal{S}\setminus M$ would always reproduce the MASG flow. Unfortunately, this is not computationally viable in general. For $s \in \mathcal{S}\setminus M$, the normalised projection of $\ket{\f}$ onto $\operatorname{span}\{\ket{r,s}: r \in \mathcal{R}\}$ is
\begin{equation}\label{eq:flow-s}
    \ket{\f_s} \propto \sum_{r \in \R}\nu_{r,s}J_r(c_{\sigma,M})\ket{r,s}.
\end{equation}
Preparing this state requires knowledge of the relative fluxes $J_r(c_{\sigma,M})$ for all $r \in \mathcal{R}$. These fluxes depend on the steady-state concentrations $c_{\sigma,M}$, which can be obtained only by solving a global system of polynomial equations for the entire MAS, a task that is computationally infeasible. Consequently, we cannot generate the state in \eq{flow-s} and must restrict our alternative neighbourhood construction to vertices in $\mathcal{R}$.

The example in \app{alt-example} shows that this restriction \textit{can} be sufficient to recover the MASG flow uniquely, but it is not guaranteed for all networks. We therefore confine our attention to the following restricted class of networks:

\begin{definition}[$\sigma$-$M$ Rigid Network]\label{def:rigid}
Let $G = (V, E, \w)$ be a network as in \defin{network}, let $M \subset V$ be a non-empty marked set, and let $\sigma$ be a probability distribution supported on $V \setminus M$. We say that $G$ is \emph{\sm rigid} if the following conditions hold:
\begin{itemize}
    \item $G$ is bipartite with vertex partition $V = V_A \sqcup V_B$,
    \item $\sigma$ is supported entirely on $V_A$,
    \item For each $b \in V_B$, we fix a ratio vector $\rho_b : V_A \to \mathbb{R}_{\neq 0}$, supported on the neighbours of~$b$,
    \item Under these conditions, there exists a \emph{unique} \sm flow $\f$ such that for all $b \in V_B$ and all $a_1, a_2 \in V_A$ such that $(b, a_1), (b, a_2) \in E$ and $\rho_b(a_1), \rho_b(a_2)$,
    \[
    \frac{\f(b, a_1)}{\rho_b(a_1)} = \frac{\f(b, a_2)}{\rho_b(a_2)}.
    \]
\end{itemize}
\end{definition}

Provided the MASG is \sm rigid, we can apply our alternative neighbourhood technique to ensure that the MASG flow is the \emph{unique} \sm flow satisfying Alternative Kirchhoff's Law.

\subsection{Approximating and sampling from the Gibbs free-energy consumption}

Our quantum algorithm to estimate the Gibbs free-energy consumption is an extension of \thm{effective} that accommodates the use of alternative neighbourhoods:

\begin{theorem}[\cite{apers2022elfs}]\label{thm:effective-alt}
    Fix a network $G = (V,E,\w)$ as in \defin{network}, a collection of alternative neighbourhoods $\Psi_\star$, a non-empty marked set $M \subset V$ and an initial vertex $s \in V \setminus M$, with the promise that there is a path in $G$ connecting $s$ to $M$. Then there exists a quantum walk algorithm that $\epsilon$-multiplicatively estimates ${\sf R}_{s,M}^{\sf alt}$ with a constant probability of success in cost $$O\left(\frac{{\sf S}}{\epsilon} + \frac{1}{\epsilon^2}\left(\mathsf{ET}_{s,M}^{\sf alt} + \log({\sf R}_{s,M}^{\sf alt}\w_s) \right){\sf U}_{\star}^{\sf alt}\right).$$
\end{theorem}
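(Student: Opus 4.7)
The plan is to follow the proof structure of \thm{effective} (the standard effective-resistance estimation algorithm of~\cite{apers2022elfs}) and systematically replace every object with its alternative counterpart from \defin{alternative}, \defin{kcl-alt}, \defin{flow-alt}, \defin{ohm-alt}, and \eq{escape-alt}. The overall algorithmic template is unchanged: prepare an initial state built from $\ket{\sigma}$, run amplitude estimation on a subroutine that performs phase estimation of the walk operator $U_{\Aalt\B}$ from \eq{walk-alt} (instead of $U_{{\cal AB}}$), and convert the resulting overlap estimate into a multiplicative estimate of ${\sf R}_{s,M}^{\sf alt}$.

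First, I would recall how \thm{effective} works. The key identity behind that algorithm is that $\|\Pi_{({\cal A}+{\cal B})^\perp}\ket{\psi_\star(s)}\|^2$ is, up to a factor $\w_s$, equal to $1/{\sf R}_{s,M}$; this follows because the projection in $({\cal A}+{\cal B})^\perp$ is precisely the (normalised) electrical flow state $\ket{\f}$, whose norm is controlled by the minimum energy ${\sf E}(\f) = {\sf R}_{s,M}$. Amplitude estimation of this overlap to multiplicative precision $\epsilon$ requires $O(1/\epsilon)$ calls to a phase-estimation subroutine whose precision in turn is controlled by $\mathsf{ET}_{s,M}$, since this quantity bounds the relevant effective spectral gap of $U_{{\cal AB}}$. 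Multiplying the $O(1/\epsilon)$ outer calls by the $O(\tfrac{1}{\epsilon}(\mathsf{ET}_{s,M} + \log({\sf R}_{s,M}\w_s))\,{\sf U}_\star)$ per-iteration cost (plus the one-off ${\sf S}$) reproduces the bound of \thm{effective}.

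Second, I would substitute the alternative objects. The projector $\Pi_{\cal A}$ is replaced by $\Pi_{\Aalt}$, so that $U_{\Aalt\B}$ has $(+1)$-eigenspace $({\Aalt}\cap{\cal B})\oplus({\Aalt}+{\cal B})^\perp$. By \defin{flow-alt}, the projection of $\ket{\psi_\star(s)}$ onto $({\Aalt}+{\cal B})^\perp$ is (up to normalisation) the alternative flow state $\ket{\falt}$ of energy ${\sf R}_{s,M}^{\sf alt}$, so the same identity holds with ${\sf R}_{s,M}$ replaced by ${\sf R}_{s,M}^{\sf alt}$. The existence of the alternative potential state $\ket{\palt}$ guaranteed by \defin{ohm-alt}, together with the spectral analysis of $U_{\Aalt\B}$ developed in~\cite{li2025multidimensional}, gives the analogue of the escape-time bound: the precision required of the phase-estimation subroutine is controlled by $\mathsf{ET}_{s,M}^{\sf alt}$ from \eq{escape-alt}, exactly as in the proof of \thm{flow-alt}. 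Reflecting through $\Aalt$ now costs ${\sf U}_\star^{\sf alt}$ by definition, while the reflection through ${\cal B}$ is free as in the standard case.

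The main technical obstacle is verifying that the two quantitative facts above -- namely the overlap identity $\|\Pi_{({\Aalt}+{\cal B})^\perp}\ket{\psi_\star(s)}\|^2 \propto 1/({\sf R}_{s,M}^{\sf alt}\w_s)$, and the spectral-gap bound in terms of $\mathsf{ET}_{s,M}^{\sf alt}$ -- transfer from the standard setting to the alternative one. Both are established as part of the multidimensional-walk machinery in~\cite{li2025multidimensional} (specifically, the same ingredients that power \thm{flow-alt}), so they can be invoked as a black box; the remaining work is just to confirm that the amplitude-estimation wrapper from~\cite{apers2022elfs} is insensitive to whether ${\cal A}$ or $\Aalt$ is used. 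Assembling everything yields $O(\tfrac{1}{\epsilon}({\sf S} + \tfrac{1}{\epsilon}(\mathsf{ET}_{s,M}^{\sf alt}+\log({\sf R}_{s,M}^{\sf alt}\w_s))\,{\sf U}_\star^{\sf alt}))$, matching the stated complexity.
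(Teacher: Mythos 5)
Your proposal is correct and follows essentially the same route as the paper: both combine the amplitude-estimation wrapper of \thm{effective} from~\cite{apers2022elfs} with the spectral analysis of $U_{\Aalt\B}$ from~\cite{li2025multidimensional} (their Theorem 1.1 and Lemma 2.12), replacing ${\sf R}_{s,M}$, $\mathsf{ET}_{s,M}$, and ${\sf U}_\star$ by their alternative counterparts. The only detail worth flagging is that the $\log({\sf R}_{s,M}^{\sf alt}\w_s)$ term arises from the standard preprocessing that rescales the walk so that ${\sf R}_{s,M}^{\sf alt}\w_s = O(1)$, rather than from the per-iteration phase-estimation cost itself, but this does not affect the stated bound.
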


The proof combines the strategy of \thm{effective} from~\cite{apers2022elfs} with Theorem~1.1 and Lemma~2.12 of~\cite{li2025multidimensional}, together with quantum amplitude estimation~\cite{brassard2002quantum, fukuzawa2023modified}; we provide it in \app{proof-effective-alt}.

Once the alternative \sm electrical flow is forced to coincide with the MASG flow, we can estimate its energy, namely the quantity $\Phi(c_{s,M})$ defined in \eq{energy-masg}. Note that the quantity $\w_s$ in this case is equal to $\sum_{r \in \R: s \in r}\nu_r \abs{\nu_{r,s}} G_r$, which can be precomputed with the thermodynamical information stored in the (Q)RAM. For ease of notation however, we shall keep it abbreviated as $\w_s$.

\begin{corollary}\label{cor:approx-mas}
    Fix an MAS \(({\cal S}, {\cal C}, {\cal R}, k)\) that admits a detailed balance, that is particle conserving and whose underlying CRN is reversible. Fix an initial species $s$ and a non-empty set of target species $M$ and assume that the resulting MASG is \sm rigid. Then there exists a quantum walk algorithm that $\epsilon$-multiplicatively estimates the instantaneous rate of Gibbs free-energy consumption $\Phi(c_{s,M})$ with a constant probability of success in cost $$O\left(\frac{1}{\epsilon}\left({\sf S} + \frac{1}{\epsilon}\left(\mathsf{ET}_{s,M}^{\sf alt} + \log\left(\Phi(c_{s,M})\w_s\right)\right){\sf U}_{\star}^{\sf alt}\right)\right).$$
\end{corollary}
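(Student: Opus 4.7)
The plan is to apply \thm{effective-alt} to the MASG $G$ obtained from the MAS, with a carefully engineered collection $\Psi_\star$ of alternative neighbourhoods so that the alternative \sm electrical flow coincides with the MASG flow $\f$ from \lem{masgflow}. Once this identification is made, \eq{energy-masg} immediately yields ${\sf R}_{s,M}^{\sf alt} = {\sf E}(\f) = \Phi(c_{s,M})$, so the $\epsilon$-multiplicative estimate of ${\sf R}_{s,M}^{\sf alt}$ delivered by \thm{effective-alt} is precisely the claimed estimate of $\Phi(c_{s,M})$, at the stated cost after substituting ${\sf R}_{s,M}^{\sf alt} = \Phi(c_{s,M})$ into the bound.

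At each reaction vertex $r \in \R$, I would take $\Psi_\star(r)$ to be an orthonormal basis of the codimension-one subspace of $\mathrm{span}\{\ket{r,s}: s \in \N(r)\}$ whose orthogonal complement is spanned by the stoichiometrically weighted state $\ket{\f_r}$ of \eq{flow-r} (with the edge-direction signs inherited from the star-state convention), chosen so that $\ket{\psi_\star(r)}$ itself lies in its span, generalising the two-element construction displayed in \eq{alt-example}. At every species vertex $s' \in {\cal S} \setminus (\{s\} \cup M)$ I keep the ordinary star state as the sole element of $\Psi_\star(s')$. Under the assumptions of \sect{cost}, the stoichiometric coefficients $\nu_{r,s'}$ and the Onsager coefficients $G_r$ are pre-computable from the equilibrium data and held in QRAM, so each $\Psi_\star(r)$ is preparable within the abstract cost~${\sf U}_\star^{\sf alt}$.

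The heart of the argument is to show that with this $\Psi_\star$ the MASG flow is the \emph{unique} unit \sm flow obeying Alternative Kirchhoff's Law. Orthogonality of $\ket{\falt}$ to the chosen basis at each $r \in \R$ is equivalent to the edge flows $\{\falt_{r,s'}\}_{s' \in \N(r)}$ being proportional to $(\nu_{r,s'})_{s' \in \N(r)}$, i.e.\ $\falt_{r,s_1}/\nu_{r,s_1} = \falt_{r,s_2}/\nu_{r,s_2}$ for every pair of neighbours of $r$; meanwhile orthogonality to the standard star states at species vertices enforces ordinary Kirchhoff conservation. Setting $V_A = {\cal S}$, $V_B = \R$, and $\rho_r(s') = \nu_{r,s'}$, these are exactly the hypotheses of \defin{rigid}, and the assumed \sm rigidity of the MASG therefore delivers uniqueness. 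Since by \lem{masgflow} the MASG flow is one such flow, by uniqueness and \defin{flow-alt} it must be the alternative \sm electrical flow.

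The main obstacle, as I see it, is the first half of this translation: verifying that the abstract orthogonality conditions at each reaction vertex really collapse to the proportionality hypothesis of \defin{rigid}, consistently with the signed edge-direction conventions in \eq{star-state}, and confirming that forcing $\ket{\psi_\star(r)}$ to belong to $\Psi_\star(r)$ does not introduce any extra constraints beyond stoichiometric proportionality (so that the codimension-one basis genuinely selects the one-dimensional family of flows through $r$ that is picked out by $\ket{\f_r}$). The small worked example in \eq{alt-example} provides the template, and the general case amounts to repeating the same linear-algebraic check at every reaction vertex and then reading off the cost from \thm{effective-alt}.
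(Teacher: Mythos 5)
Your proposal matches the paper's intended argument exactly: the paper gives no explicit proof of this corollary, but the surrounding text (the construction of $\Psi_\star(r)$ from $\ket{\f_r}$ in \eq{flow-r}, the restriction to reaction vertices, the appeal to \sm rigidity via $\rho_r(s')=\nu_{r,s'}$, and then \thm{effective-alt} with ${\sf R}_{s,M}^{\sf alt}={\sf E}(\f)=\Phi(c_{s,M})$) is precisely what you reconstruct. The one caveat you already flag — checking that orthogonality to the complement of $\ket{\f_r}$ really encodes proportionality to $(\nu_{r,s'})$ once the $1/\sqrt{\w_{r,s'}}$ factors in the flow state \eq{flowstate} are accounted for — is a subtlety the paper itself glosses over in \eq{flow-r}, so your treatment is on par with the source.
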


Additionally, by employing \thm{flow-alt}, we can approximate, and subsequently sample from, the state
\[
\ket{\f} = \frac{1}{\sqrt{2\Phi(c_{s,M})}}\sum_{s \in {\cal S}}\sum_{r \in \R}\frac{\f_{s,r}}{\sqrt{\w_{s,r}}} \left(\ket{s,r} + \ket{r,s}\right).
\]

\begin{corollary}\label{cor:energy-mas}
	Fix an MAS \(({\cal S}, {\cal C}, {\cal R}, k)\) that admits a detailed balance, that is particle conserving and whose underlying CRN is reversible. Fix an initial species $s$ and a non-empty set of target species $M$ and assume that the resulting MASG is \sm rigid. Let $\f$ be the MASG flow (see \eq{masgflow}). Then there exists a quantum walk algorithm that returns a state $\ket{\widetilde{\f}}$ satisfying
	$$ \frac{1}{2}\norm{\proj{\widetilde{\f}} - \proj{\f}} \leq \epsilon$$
    with a constant probability of success in cost
    $$O\left({\sf S} + \frac{1}{\epsilon^2}\left(\sqrt{\mathsf{ET}_{s,M}^{\sf alt}} + \log(\Phi(c_{s,M})\w_s) \right){\sf U}_{\star}^{{\sf alt}}\right).$$
    Additionally, we can return an $\epsilon$-approximation of $J_r(c_{\sigma,M})^2/G_r$ for any $r \in \R$ with a constant probability of success in cost
    $$O\left(\frac{{\sf S}}{\epsilon} + \frac{1}{\epsilon^3}\left(\sqrt{\mathsf{ET}_{s,M}^{\sf alt}} + \log(\Phi(c_{s,M})\w_s) \right){\sf U}_{\star}^{{\sf alt}}\right).$$
\end{corollary}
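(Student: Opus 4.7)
The plan is to reduce the state-preparation claim to \thm{flow-alt} by engineering alternative neighbourhoods that force the alternative \sm electrical flow to coincide with the MASG flow, and then to wrap quantum amplitude estimation around that subroutine to extract the reaction-$r$ contribution to the Gibbs free-energy consumption.

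First I construct the alternative neighbourhoods as sketched in \sect{alt}. At each species vertex $s' \in {\cal S}\setminus M$ I keep only the standard star state, $\Psi_\star(s') = \{\ket{\psi_\star(s')}\}$. At each reaction vertex $r \in \R$ I extend $\{\ket{\psi_\star(r)}\}$ to an orthonormal basis $\Psi_\star(r)$ of the orthogonal complement of the state $\ket{\f_r}$ from \eq{flow-r} inside $\mathrm{span}\{\ket{r,s'}: s' \in \N(r)\}$. By \defin{kcl-alt}, a unit \sm flow $\f'$ satisfies Alternative Kirchhoff's Law iff its flow state is orthogonal to $\Psi_\star(u)$ at every non-terminal $u$. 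At each $r \in \R$, orthogonality to the extra basis vectors beyond $\ket{\psi_\star(r)}$ is equivalent to the ratio constraint $\f'_{r,s_1}/\nu_{r,s_1} = \f'_{r,s_2}/\nu_{r,s_2}$ for all $s_1,s_2 \in \N(r)$ — precisely the rigidity condition of \defin{rigid} with $\rho_r = \nu_{r,\cdot}$. By the \sm rigidity assumption this pins down a unique unit \sm flow, and \lem{masgflow} already exhibits the MASG flow $\f$ as such a flow. Hence the alternative electrical flow $\falt$ equals $\f$, and by \eq{energy-masg} we obtain ${\sf R}_{s,M}^{\sf alt} = {\sf E}(\f) = \Phi(c_{s,M})$. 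Plugging these identifications into \thm{flow-alt} yields the first claim of the corollary directly.

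For the second claim, a direct calculation on $\ket{\f}$ from \eq{flowstate}, using $\f_{s',r} = \nu_{r,s'}J_r(c_{\sigma,M})$, $\w_{s',r} = \nu_r\abs{\nu_{r,s'}}G_r$, and $\sum_{s'}\abs{\nu_{r,s'}} = \nu_r$, shows that the probability of observing vertex $r$ in either register when measuring $\ket{\f}$ in the computational basis equals
$$\sum_{s' \in \N(r)}\frac{\f_{s',r}^2}{\Phi(c_{s,M})\,\w_{s',r}} \;=\; \frac{J_r(c_{\sigma,M})^2/G_r}{\Phi(c_{s,M})}.$$
I then wrap quantum amplitude estimation around the state-preparation subroutine of the first part, run with sufficiently small preparation error, to obtain an $\epsilon$-multiplicative estimate of this probability at cost $O(1/\epsilon)$ times the preparation cost. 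Multiplying by an independent $\epsilon$-multiplicative estimate of $\Phi(c_{s,M})$ obtained from \cor{approx-mas}, whose cost is dominated by the amplitude-estimation outer loop, delivers the claimed $\epsilon$-approximation of $J_r(c_{\sigma,M})^2/G_r$.

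The principal obstacle is the first step: carefully verifying that the alternative-neighbourhood construction at reaction vertices encodes exactly the ratio constraint of \defin{rigid}, so that the unique flow guaranteed by rigidity is indeed the flow of \lem{masgflow}. Once this identification is secured, both parts of the corollary follow by composing \thm{flow-alt}, quantum amplitude estimation, and \cor{approx-mas}, with only routine precision bookkeeping.
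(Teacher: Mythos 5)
Your proposal is correct and follows essentially the same route as the paper: reverse-engineer alternative neighbourhoods at the reaction vertices so that, by \sm rigidity, the MASG flow of \lem{masgflow} is the unique unit \sm flow obeying Alternative Kirchhoff's Law, identify ${\sf R}_{s,M}^{\sf alt} = \Phi(c_{s,M})$, invoke \thm{flow-alt} for the state, and apply amplitude estimation to the probability of measuring an edge incident to $r$. The one substantive place where you diverge is the second part: you correctly note that this measurement probability is $\bigl(J_r(c_{\sigma,M})^2/G_r\bigr)/\Phi(c_{s,M})$, so an additional multiplicative estimate of $\Phi(c_{s,M})$ (via \cor{approx-mas}) is needed to recover $J_r(c_{\sigma,M})^2/G_r$ itself; the paper's displayed identity elides this normalising factor and stops at amplitude estimation, so your version is the more careful reading. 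Two small loose ends remain in your bookkeeping, both shared with (or inherited from) the paper: the cost of \cor{approx-mas} scales as $\mathsf{ET}_{s,M}^{\sf alt}/\epsilon^2$ and is not literally dominated by the stated $\sqrt{\mathsf{ET}_{s,M}^{\sf alt}}/\epsilon^3$ term when the escape time is large, and a multiplicative amplitude estimate of a probability $p$ costs $O(1/(\epsilon\sqrt{p}))$ rather than $O(1/\epsilon)$ iterations, a dependence the stated complexity suppresses. Neither affects the correctness of your argument relative to what the paper itself proves.
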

\begin{proof}
    The first part of the corollary follows directly from applying \thm{flow-alt} with our modified quantum walk operator $U_{\Aalt\B}$, ensuring that the MASG flow (see \eq{masgflow}) matches the alternative electrical flow.

    For the second part, note that measuring the resulting state $\ket{\widetilde{\f}}$ yields an edge adjacent to reaction $r$ with a probability that is $\epsilon$-multiplicatively approximate to
    \[
    \frac{1}{\Phi(c_{s,M})}\sum_{s \in {\cal S}}\frac{\f_{s,r}^2}{\w_{s,r}} = \frac{J_r(c_{\sigma,M})^2}{G_r}.
    \]
    We can therefore approximate this quantity by applying quantum amplitude estimation.
\end{proof}

\subsection{Discussion and Outlook}

Many state-of-the-art algorithms for CRN analysis employ classical graph primitives such as Breadth-First Search, Depth-First Search, and random walks~\cite{Turtscher2022pathfinder, Ismail2022graphCRN_review, Margraf2019systematic} (see \app{crn-models} for a survey), but a formal computational cost analysis of these methods has been largely absent from the literature. This is primarily because a standard computational access model for CRNs has not been established. For the large-scale networks relevant in practical applications, however, it is reasonable to model access to the CRN's structure and its precomputed thermodynamic properties in a manner analogous to an adjacency matrix, with the network data stored in readable QRAM. In this access model, the species-reachability problem of \cor{detect-mas} reduces to \st\ connectivity, which classical algorithms require $\Omega(n^2)$ queries to solve, while quantum walks solve it in $O(n^{3/2})$ queries~\cite{durr2006quantum}, with $n$ the number of species. Our cost bound additionally incorporates the dissipation $\Phi(c_{\sigma,M})$ and the total MASG weight ${\sf W}$, so the speedup tightens whenever the perturbation is concentrated.

Our complexity statements rely on QRAM access to the precomputed thermodynamic data of the CRN, namely the stoichiometric coefficients $\nu_{r,s}$ and the Onsager coefficients $G_r$ at the equilibrium $c^*$. Whether large QRAMs are practical to fault-tolerantly implement remains an active question. Three features of the present setting partially mitigate this concern. First, the input is \emph{static}: once the equilibrium thermodynamics of a CRN has been computed, the same QRAM is reused across many perturbations $\eta$, amortising its construction cost. Second, the data is \emph{sparse}, as each reaction involves $O(1)$ species, so the MASG has $O(|{\cal R}|)$ edges rather than $O(|{\cal S}|^2)$, and the adjacency oracle reduces to a sparse-matrix oracle. Third, for the application classes most relevant to our framework, such as competitive enzyme inhibition, small metabolic motifs, and catalytic cycles, the precomputed dataset is small enough that the practical access cost is governed by classical I/O rather than by generic QRAM hardware.

Our analysis is predicated on three thermodynamic constraints: reversibility, the existence of a positive detailed-balance equilibrium $c^*$, and particle conservation. Many mechanistic models violate at least one of these: combustion schemes contain intrinsically irreversible steps, metabolic networks include committed reactions, and numerous biological or catalytic processes are deliberately driven far from equilibrium. A prominent class of CRNs that does satisfy all three is \emph{competitive binding}, in which an injected inhibitor species competes with the substrate for binding to an enzyme. This scenario fits naturally within the bipartite MASG framework: the additional species and binding/unbinding reactions preserve the network topology while modifying the equilibrium composition and the flux distribution~\cite{Pavlopoulos2018bipartite}. Kinetically, competitive inhibition increases the apparent Michaelis constant $K_M$ without affecting the maximal velocity $V_{\max}$ under the quasi-steady-state approximation~\cite{Cornish2013FundamentalsEnzyme, Segel1989quasiQSSA}, and the underlying expansion of the stoichiometric matrix is consistent with the deficiency-based framework~\cite{Feinberg2019foundations} and with mechanistic encodings widely used in systems biology~\cite{Klipp2016systems}. Although we do not explicitly model enzyme--inhibitor dynamics here, our framework applies directly to predicting how an inhibitor injection redistributes flux across a metabolic motif, which is a problem of practical interest in drug discovery.

Beyond the chemistry application, our use of alternative neighbourhoods is a methodological contribution to the multidimensional quantum-walk programme of \cite{jeffery2023multidimensional, jeffery2025multidimensional, li2025multidimensional}. In those works, alternative neighbourhoods were introduced as a tool to ease state preparation when the canonical star state $\ket{\psi_\star(u)}$ is hard to prepare directly; our use is structurally different, in that we exploit them to encode a local linear constraint on the flow at each reaction vertex, so that the only valid alternative flow is the chemically correct (non-minimum-energy) MASG flow. This suggests that constraint-encoding via local neighbourhoods is a general tool, applicable to any electrical-network problem in which a linear constraint at each vertex selects a unique flow from the affine family of \ssm flows. Natural directions for follow-up work include extending the framework to stochastic kinetics, time-dependent driving, and irreversible reaction networks, identifying CRN problems where structured data access can replace generic QRAM, and developing further applications of constraint-encoding alternative neighbourhoods to flow problems beyond chemistry. More broadly, our results indicate that quantum walks may provide a scalable approach for studying mechanistic aspects of biochemical regulation, drug action, and energy dissipation in large molecular networks.

\section{Acknowledgements}

This work was supported by the Dutch National Growth Fund (NGF), as part of the Quantum Delta NL programme.

\bibliographystyle{quantum}
\bibliography{main}

\clearpage
\onecolumn
\appendix

\section{A worked example: an s-t electrical network} \label{app:worked-electrical}

To make the definitions of \sect{graph-elec} concrete, consider the network $G = (V, E, \w)$ with vertex set $V = \{s, x, y, t\}$ and directed edge set $\E = \{(s, x), (x, y), (x, t), (y, t)\}$. The weight of each edge $(u, v) \in \E$ is $\w_{u,v} = \tfrac{1}{4}$, except for the edge $(s, x)$, which has weight $\w_{s,x} = 1$. This network is visualised in \fig{normal}, along with the \st\ electrical flow $\f$ on $G$ and the corresponding potential vector $\p$.

It is straightforward to verify that the flow $\f$ satisfies Kirchhoff's Law (see \defin{kcl}): the net flow entering any vertex other than the source $s$ and the sink $t$ is zero. The potential vector $\p$ satisfies Ohm's Law (see \defin{ohm}): for each edge $(u, v)$, the potential difference $\p_u - \p_v$ equals $\f_{u,v}/\w_{u,v}$. The effective resistance ${\sf R}_{s,t}$ can be determined in two equivalent ways: by computing the energy of the flow depicted in \fig{normal}, or by reading off the potential at $s$, $\p_s = \tfrac{11}{3}$. Either way, ${\sf R}_{s,t} = \tfrac{11}{3}$.

\begin{figure}[h]
	\centering
	\begin{tikzpicture}[scale=0.89, transform shape]
		\node at (0,0) {\begin{tikzpicture}
				\draw[->] (-2,0)--(-0.1,0);
				\draw[->] (0,0)--(1.9,0);
				\draw[->] (1.08,.92)--(1.92,.08);
				\draw[->] (0,0)--(.92,.92);

				\filldraw (-2,0) circle (.1);
				\filldraw (0,0) circle (.1);
				\filldraw (2,0) circle (.1);
				\filldraw (1,1) circle (.1);

				\node at (-2,-0.3) {$s$};
				\node at (0,-0.3) {$x$};
				\node at (1,1.25) {$y$};
				\node at (2,-0.3) {$t$};

				\node at (-1,0.25) {$1$};
				\node at (1,0.3) {$\frac{1}{4}$};
				\node at (0.3,0.7) {$\frac{1}{4}$};
				\node at (1.7,0.7) {$\frac{1}{4}$};

				\node at (0,-1) {$\w_{u,v}$ for each $(u,v) \in \E$};
		\end{tikzpicture}};

		\node at (6,0) {\begin{tikzpicture}
				\draw[->] (-2,0)--(-0.1,0);
				\draw[->] (0,0)--(1.9,0);
				\draw[->] (1.08,.92)--(1.92,.08);
				\draw[->] (0,0)--(.92,.92);

				\filldraw (-2,0) circle (.1);
				\filldraw (0,0) circle (.1);
				\filldraw (2,0) circle (.1);
				\filldraw (1,1) circle (.1);

				\node at (-2,-0.3) {$s$};
				\node at (0,-0.3) {$x$};
				\node at (1,1.25) {$y$};
				\node at (2,-0.3) {$t$};

				\node at (-1,0.25) {$1$};
				\node at (1,0.3) {$\frac{2}{3}$};
				\node at (0.3,0.7) {$\frac{1}{3}$};
				\node at (1.7,0.7) {$\frac{1}{3}$};
				\node at (0,-1) {$\f_{u,v}$ for each $(u,v) \in \E$};
		\end{tikzpicture}};

		\node at (12,0){\begin{tikzpicture}
				\draw[->] (-2,0)--(-0.1,0);
				\draw[->] (0,0)--(1.9,0);
				\draw[->] (1.08,.92)--(1.92,.08);
				\draw[->] (0,0)--(.92,.92);

				\filldraw (-2,0) circle (.1);
				\filldraw (0,0) circle (.1);
				\filldraw (2,0) circle (.1);
				\filldraw (1,1) circle (.1);

				\node at (-2,-0.3) {$s$};
				\node at (0,-0.3) {$x$};
				\node at (1,1.25) {$y$};
				\node at (2,-0.3) {$t$};

				\node at (-2,0.5) {$\frac{11}{3}$};
				\node at (-0,0.5) {$\frac{8}{3}$};
				\node at (1,0.5) {$\frac{4}{3}$};
				\node at (2,0.5) {$0$};
				\node at (0,-1) {$\p_u$ for each $u \in V$};
		\end{tikzpicture}};
	\end{tikzpicture}
	\caption{Graph $G$ with its \st\ electrical flow $\f$ and corresponding potential $\p$ at each vertex.}\label{fig:normal}
\end{figure}

\section{Worked computation of mass-action rate functions}\label{app:rate-example}

To illustrate \defin{kin}, we compute two of the rate functions for the example CRN of \eq{example-crn}. Recall the convention $0^0 = 1$ for any concentration $c_s = 0$.

For the reaction $A + C \to D$, the reactant complex is $y = A + C$, so $y_A = y_C = 1$ and $y_B = y_D = y_E = 0$. Hence
\[
{\cal K}_{A + C \to D}(c)
=
k_{A + C \to D} (c_A)^1 (c_B)^0 (c_C)^1 (c_D)^0 (c_E)^0
=
k_{A + C \to D}\, c_A c_C.
\]
For the reaction $2B \to A$, the reactant complex is $y = 2B$, so $y_B = 2$ and the remaining components vanish. Hence
\[
{\cal K}_{2B \to A}(c)
=
k_{2B \to A} (c_B)^2.
\]
In the abbreviated notation $\prod_{s \in {\cal S}} c_s^{y_s} = c^y$, both expressions reduce to ${\cal K}_{y \to y'}(c) = k_{y \to y'} c^y$.

\section{Linear-response derivation of the chemical affinity}\label{app:linear-response}

Here we provide the linear-response derivation underlying \eq{chempot-change}, \eq{netflux-change}, and \eq{netflux-change2}, following the conventions in~\cite{de2013non}. Recall the chemical potential
\(
\mu_s(c) = RT \ln(c_s) + \mu_s^0
\)
from \eq{chempot}. Since we are only interested in changes of $\mu_s(c)$ under a perturbation $\delta c_s := c_s - c^{*}_s$ around the equilibrium $c^*$, the reference $\mu_s^0$ drops out. Applying the first-order Taylor expansion of $\ln(c_s)$ around $c^*_s$ gives
\[
\delta\ln(c_s) = \ln(c^{*}_s+\delta c_s)-\ln(c^{*}_s)\approx\frac{\delta c_s}{c^{*}_s},
\]
so that
\begin{equation*}
  \delta \mu_s(c) =  RT\,\delta\ln(c_s) \approx  RT\,\frac{\delta c_s}{c^{*}_s}.
\end{equation*}
It is standard in linear non-equilibrium thermodynamics (see e.g.\ Chapter~X in~\cite{de2013non}) to assume this to be an equality, which is exact in the limit \(\delta c_s/c^{*}_s\to 0\). This recovers \eq{chempot-change}.

We can apply the same first-order expansion to $J_{r}(c)$, with $r = y \to y'$:
\begin{equation*}
\begin{split}
    \delta J_{r}(c) &= \sum_{s \in {\cal S}} \left. \frac{\partial J_{r}(c)}{\partial c_s}\right|_{c^{*}}\delta c_s \\
    &= \sum_{s \in {\cal S}} \left(\frac{y_s}{c^{*}_s}{\cal K}_{r}(c^*) - \frac{y'_s}{c^{*}_s}{\cal K}_{\overline{r}}(c^*)\right)\delta c_s \\
    &= -{\cal K}_{r}(c^*)\sum_{s \in {\cal S}} \frac{\nu_{r,s}}{c^{*}_s} \delta c_s,
\end{split}
\end{equation*}
where we used the detailed-balance condition ${\cal K}_{r}(c^*) = {\cal K}_{\overline{r}}(c^*)$ from \eq{db} in the last step. This recovers \eq{netflux-change}.

Combining the two expansions with the definitions of the affinity
\[
\Delta\mu_{r}(c) := -\sum_{s \in {\cal S}}\nu_{r,s}\, \delta\mu_s(c),
\]
and the Onsager coefficient
\[
G_{r} := \frac{{\cal K}_{r}(c^*)}{RT} > 0,
\]
yields the linear flux--affinity relation $J_r(c) = G_r \Delta\mu_r(c)$ stated in \eq{netflux-change2}. The instantaneous Gibbs free-energy consumption $\Phi(c) = \sum_{r \in \R} J_r(c)\Delta\mu_r(c)$ then reduces to $\Phi(c) = \sum_{r \in \R} J_r(c)^2/G_r$ near equilibrium, as in \eq{free-energy}.

\section{Quantum walks via phase estimation: framework details}\label{app:phase-estimation}

This appendix expands on \sect{qw} by giving the explicit phase-estimation construction underlying the four quantum walk theorems used in the main text. A quantum walk algorithm derived from the electrical-network framework can be viewed as a specialised phase-estimation algorithm~\cite{kitaev1996PhaseEst}, specified by the following parameters:
\begin{itemize}
    \item a finite-dimensional complex inner product space $H$,
    \item a unit vector $\ket{\psi_0}\in H$,
    \item two subspaces ${\cal A}, {\cal B}$ of $H$, such that $\ket{\psi_0}\in {\cal B}^{\perp}$.
\end{itemize}
These parameters define a quantum algorithm as follows. Let
\begin{equation}\label{eq:U-AB}
    U_{\cal{AB}}=(2\Pi_{\cal{A}}-I)(2\Pi_{\cal{B}}-I).
\end{equation}
We perform phase estimation of $U_{\cal{AB}}$ on initial state $\ket{\psi_0}$ to a certain precision, measure the phase register, and output $1$ if the measured phase is $0$, and $0$ otherwise. Here ``perform phase estimation'' means running a quantum subroutine that decides whether $\ket{\psi_0}$ has non-zero overlap with the $(+1)$-eigenspace of $U_{\cal{AB}}$, which, due to the structure of $U_{\cal AB}$, is precisely $({\cal A} \cap {\cal B}) \oplus ({\cal A}+{\cal B})^{\perp}$. Since by construction $\ket{\psi_0}$ is orthogonal to ${\cal B}$, the algorithm effectively decides whether $\ket{\psi_0} \in {\cal A} + {\cal B}$ or not.

The exact initialisation of the phase-estimation parameters depends on the application. In all our applications these parameters relate to objects from the classical random walk on the underlying graph and the electrical properties of the graph as an electrical network. For simplicity, we describe here the simplest initialisation, where $\textrm{supp}(\sigma) = \{s\}$ and the marked set $M$ is either empty or equal to $\{t\}$, and where the goal of the quantum walk is to decide which case is true (assuming $t$ is reachable from $s$).

The phase-estimation algorithm takes place on the edge space of $G$ from \eq{hilbert},
\[
{\cal H}=\mathrm{span}\{\ket{u,v}:\{u,v\} \in E(G)\},
\]
in which each edge $\{u,v\}$ ``appears twice'', as $\ket{u,v}$ and $\ket{v,u}$. The space ${\cal A}$ is built from the star states defined in \eq{star-state}:
\begin{equation}\label{eq:star-space}
    {\cal A} = \mathrm{span}\{\ket{\psi_\star(u)}: u \in V \setminus \{s,t\}\}.
\end{equation}
The space ${\cal B}$ is the antisymmetric subspace of ${\cal H}$,
\begin{equation}\label{eq:symm}
    {\cal B} = \mathrm{span}\{\ket{u,v} + \ket{v,u}:(u,v)\in E(G)\}.
\end{equation}
The initial state $\ket{\psi_0}$ is the star state of $s$ projected onto the symmetric subspace of ${\cal H}$ to ensure orthogonality with ${\cal B}$:
\begin{equation}\label{eq:state-init}
    \ket{\psi_0} = \sqrt{2} (I-\Pi_{\cal B})\ket{\psi_\star(s)}.
\end{equation}
Since by assumption $s$ and $t$ are connected, the \st\ electrical flow $\f$ exists, with corresponding flow state $\ket{\f}$ defined in \eq{flowstate} and living in ${\cal{B}}^\perp$. The key observation in~\cite{belovs2013quantum} is that, if $M = \{t\}$, the flow $\ket{\f}$ lies in $({\cal A} + {\cal B})^{\perp}$ and has non-zero overlap with $\ket{\psi_0}$, so $\ket{\psi_0} \notin {\cal A} + {\cal B}$ whenever $s$ and $t$ are connected. Conversely, if $M = \emptyset$, it is shown that $\ket{\psi_0} \in {\cal A} + {\cal B}$. The four quantum walk algorithms (\thm{detect}, \thm{find}, \thm{effective}, \thm{flow}) all build on this framework, modified to handle non-trivial initial distributions $\sigma$, multi-element marked sets $M$, and additional quantum subroutines beyond pure phase estimation; we refer the interested reader to the original sources for the full constructions.

\section{Worked example: alternative neighbourhoods on a 3-species CRN}\label{app:alt-example}

To build intuition for the construction described in \sect{alt}, we work through a small example. Consider the CRN
\begin{equation}\label{eq:crn-alt}
	\begin{split}
		A &\xrightleftharpoons[r_2]{r_1} B, \\
		A + B &\xrightleftharpoons[r_4]{r_3} 2C.
	\end{split}
\end{equation}
We let $\eta_{A} = -\eta_{C} = 1$ and show the resulting MASG and corresponding MASG flow (which is a unit flow from $A$ to $C$) in \fig{crn-alt}.

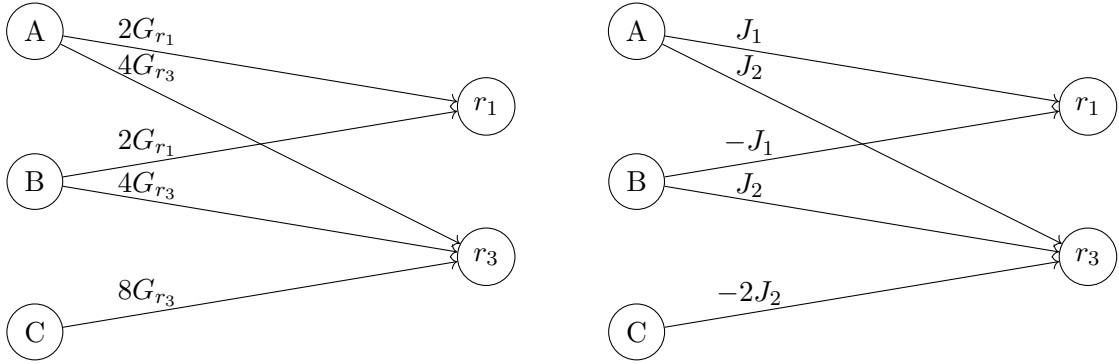
\begin{figure}
	\centering
	\begin{tikzpicture}
		\node at (0,0) {
			\begin{tikzpicture}
				\node[draw, circle] (R1) at (6,3) {$r_1$};
				\node[draw, circle] (R2) at (6,1) {$r_3$};

				\node[draw, circle] (A) at (0,4) {A};
				\node[draw, circle] (B) at (0,2) {B};
				\node[draw, circle] (C) at (0,0) {C};

				\draw[->] (A) to (R1);
				\draw[->] (B) to (R1);
				\draw[->] (A) to (R2);
				\draw[->] (B) to (R2);
				\draw[->] (C) to (R2);

				\node at (1.5,4) {$2G_{r_1}$};
				\node at (1.5,3.52) {$4G_{r_3}$};
				\node at (1.5,2.52) {$2G_{r_1}$};
				\node at (1.5,1.95) {$4G_{r_3}$};
				\node at (1.5,0.52) {$8G_{r_3}$};
			\end{tikzpicture}
		};
		\node at (8,0) {
			\begin{tikzpicture}
				\node[draw, circle] (R1) at (6,3) {$r_1$};
				\node[draw, circle] (R2) at (6,1) {$r_3$};

				\node[draw, circle] (A) at (0,4) {A};
				\node[draw, circle] (B) at (0,2) {B};
				\node[draw, circle] (C) at (0,0) {C};

				\draw[->] (A) to (R1);
				\draw[->] (B) to (R1);
				\draw[->] (A) to (R2);
				\draw[->] (B) to (R2);
				\draw[->] (C) to (R2);

				\node at (1.5,4) {$J_1$};
				\node at (1.5,3.52) {$J_2$};
				\node at (1.5,2.52) {$-J_1$};
				\node at (1.5,1.95) {$J_2$};
				\node at (1.5,0.52) {$-2J_2$};
			\end{tikzpicture}
		};
	\end{tikzpicture}
	\caption{The resulting MASG for the CRN in \eq{crn-alt}. The left image shows the weight assignments of each (directed) edge, the right image shows the corresponding flow values for the MASG flow, where for notational clarity we have omitted the argument $(c_{s,M})$ for each net flux $J_r$.}
	\label{fig:crn-alt}
\end{figure}

By \lem{masgflow}, we know that the MASG flow is a unit flow satisfying Kirchhoff's Law. In particular, this means that $J_1 = J_2 = \tfrac{1}{2}$. There is no guarantee that this flow matches the $A$-$C$ electrical flow on the MASG. In fact, for the simplified case where all $G_r = 1$, we see that the flow is not optimal, as shown in \fig{crn-opt}.

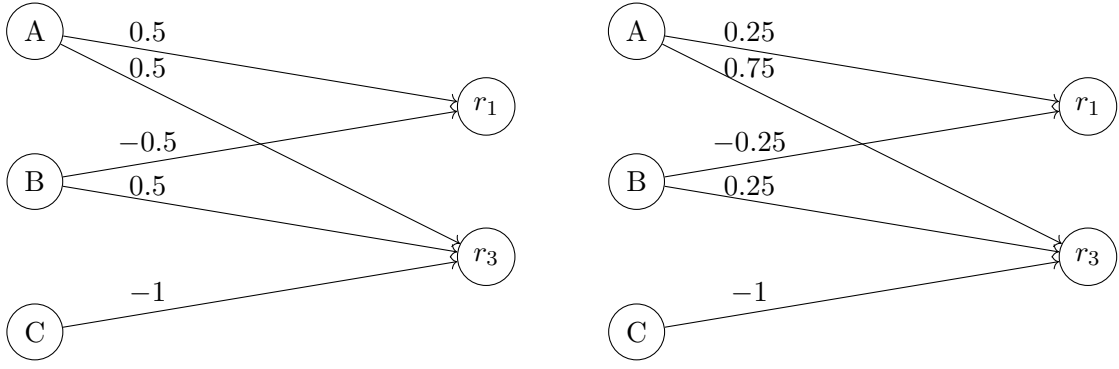
\begin{figure}
	\centering
	\begin{tikzpicture}
		\node at (0,0) {
			\begin{tikzpicture}
				\node[draw, circle] (R1) at (6,3) {$r_1$};
				\node[draw, circle] (R2) at (6,1) {$r_3$};

				\node[draw, circle] (A) at (0,4) {A};
				\node[draw, circle] (B) at (0,2) {B};
				\node[draw, circle] (C) at (0,0) {C};

				\draw[->] (A) to (R1);
				\draw[->] (B) to (R1);
				\draw[->] (A) to (R2);
				\draw[->] (B) to (R2);
				\draw[->] (C) to (R2);

				\node at (1.5,4) {$0.5$};
				\node at (1.5,3.52) {$0.5$};
				\node at (1.5,2.52) {$-0.5$};
				\node at (1.5,1.95) {$0.5$};
				\node at (1.5,0.52) {$-1$};
			\end{tikzpicture}
		};
		\node at (8,0) {
			\begin{tikzpicture}
				\node[draw, circle] (R1) at (6,3) {$r_1$};
				\node[draw, circle] (R2) at (6,1) {$r_3$};

				\node[draw, circle] (A) at (0,4) {A};
				\node[draw, circle] (B) at (0,2) {B};
				\node[draw, circle] (C) at (0,0) {C};

				\draw[->] (A) to (R1);
				\draw[->] (B) to (R1);
				\draw[->] (A) to (R2);
				\draw[->] (B) to (R2);
				\draw[->] (C) to (R2);

				\node at (1.5,4) {$0.25$};
				\node at (1.5,3.52) {$0.75$};
				\node at (1.5,2.52) {$-0.25$};
				\node at (1.5,1.95) {$0.25$};
				\node at (1.5,0.52) {$-1$};
			\end{tikzpicture}
		};
	\end{tikzpicture}
	\caption{The MASG flow for $J_1 = J_2 = \tfrac{1}{2}$ is shown on the left, and an $A$-$C$ flow with lower energy (assuming all $G_r = 1$) is shown on the right, disproving the fact that our MASG flow matches the $A$-$C$ electrical flow.}
	\label{fig:crn-opt}
\end{figure}

Recall from \eq{star-state} that the star state for the vertex $r_3$ is given by
\begin{align*}
	\ket{\psi_\star(r_3)} &= \frac{1}{\sqrt{\w_{r_3}}}\left(\sum_{v\in\N^+({r_3})}\sqrt{\w_{{r_3},v}}\ket{{r_3},v} - \sum_{v\in\N^-({r_3})}\sqrt{\w_{{r_3},v}}\ket{{r_3},v}\right) \\
	&= \frac{1}{2}\left(-\ket{r_3,A} - \ket{r_3,B} - \sqrt{2}\ket{r_3,C}\right).
\end{align*}
Now suppose that we add an alternative neighbourhood to $\Psi_\star(r_3)$:
\begin{equation}\label{eq:alt-example}
	\Psi_\star(r_3) = \left\{\frac{1}{2}\left(-\ket{r_3,A} - \ket{r_3,B} - \sqrt{2}\ket{r_3,C}\right), \frac{1}{\sqrt{2}}\left(\ket{r_3,A} - \ket{r_3,B}\right)\right\}.
\end{equation}
This means that $\Psi_\star(r_3)$ forms a basis for a 2-dimensional subspace of the 3-dimensional subspace spanned by the states $\{\ket{r_3,A},\ket{r_3,B},\ket{r_3,C}\}$. The flow state of any alternative $A$-$C$ flow satisfying Alternative Kirchhoff's Law must be orthogonal to this subspace, meaning in particular that
\begin{align*}
    \bra{\f}\frac{1}{2}\left(-\ket{r_3,A} - \ket{r_3,B} - \sqrt{2}\ket{r_3,C}\right) =
    -\frac{1}{2\sqrt{4G_{r_3}}}\left(\f_{r_3,A} + \f_{r_3,B} + \f_{r_3,C}\right) = 0,
\end{align*}
which is equivalent to the flow being conserved at $r_3$, and
\begin{align*}
    \bra{\f}\frac{1}{\sqrt{2}}\left(\ket{r_3,A} - \ket{r_3,B}\right) =
    \frac{1}{2\sqrt{4G_{r_3}}}\left(\f_{r_3,A} - \f_{r_3,B}\right) = 0,
\end{align*}
which forces $\f_{r_3,A} = \f_{r_3,B}$. Jointly, these constraints leave us only with a single option for the flow through $r_3$, since $\f$ must also be a unit flow from $A$ to $C$:
\[
2\f_{r_3,A} = 2\f_{r_3,B} = -\f_{r_3,C} = 1.
\]
Note that this immediately fixes the rest of the flow as well, since the outgoing flow of $A$ must be $1$, recovering the MASG flow.

\section{Proof of \thm{effective-alt} (alternative effective resistance)}\label{app:proof-effective-alt}

Here we provide the proof of \thm{effective-alt}, which we use in the main text to derive \cor{approx-mas}.

\begin{proof}[Proof of \thm{effective-alt}]
    The proof combines the strategy of the proof of \thm{effective} from~\cite{apers2022elfs} with Theorem~1.1 and Lemma~2.12 from~\cite{li2025multidimensional}. These theorems state that if we apply phase estimation with precision $O(\epsilon/({\sf ET}_{s,M}^{\sf alt}{\sf R}_{s,M}^{\sf alt}\w_s))$ to the modified quantum walk operator $U_{\Aalt\B}$ from \eq{walk-alt}, the probability of the phase estimation register yielding ``0'' is $p' = (1\pm \epsilon)/({\sf R}_{s,M}^{\sf alt}\w_s)$. Recall that $U_{\Aalt\B}$ is designed such that the MASG flow (see \eq{masgflow}) matches the alternative electrical flow, per \thm{flow-alt}.

    Hence by combining this with quantum amplitude estimation~\cite{brassard2002quantum, fukuzawa2023modified}, we obtain an $\epsilon$-multiplicative estimate of $p'$ at the cost of $O\left(\frac{1}{\epsilon}\log(\frac{1}{p'})\right)$ calls to the phase estimation routine.

    The proof is concluded by modifying the quantum walk with the standard quantum walk technique that ensures that ${\sf R}_{s,M}^{\sf alt}\w_s = O(1)$, incurring an extra $\log({\sf R}_{s,M}^{\sf alt}\w_s)$ (see~\cite{belovs2013quantum, apers2022elfs}).
\end{proof}

\section{Existing graph representations of CRNs}\label{app:crn-models}

Chemical reaction networks can be encoded as graphs in several inequivalent ways, each emphasising different structural and computational aspects. We briefly summarise three representative families of approaches that have been employed in the literature.

\subsection{Energy-weighted reaction graphs}

A first family~\cite{Haag2014interactive, Bergeler2015heuristics, Proppe2016uncertainty, Simm2017contextCRN, Simm2018error, Heuer2018integrated, Simm2018exploration, Proppe2018mechanism, Unsleber2020exploration, Baiardi2021expansive, Steiner2022autonomous, Unsleber2022chemoton, Turtscher2022pathfinder, Steiner2024human, Muller2024heron, Csizi2024nanoscale, Weymuth2024scine, Bensberg2024uncertainty} encodes a CRN as a weighted graph in which vertices represent stable molecular structures (typically minima on the relevant potential energy surface) and edges represent reactions, with edge weights derived from quantum chemical activation barriers or related reaction costs. Pathway analysis is then carried out by classical shortest-path algorithms such as Dijkstra's or Yen's algorithm; one representative implementation is the Pathfinder framework of~\cite{Turtscher2022pathfinder}. Methods in this family have evolved from interactive visualisation of reactivity, to heuristics-guided exploration, and more recently to error-controlled exploration using Gaussian-process regression and spline-based interpolation of reaction paths. The accuracy of the resulting analysis is bounded by the accuracy of the barrier data, and the cost of obtaining these barriers at quantum chemical accuracy currently limits scaling to large networks.

\subsection{Connectivity-matrix representations}

A second family~\cite{Habershon2015samplingRW, Habershon2016graphrxnpath, Ismail2019auto_multi_rxn_graph, Ismail2022graphCRN_review, Ismail2022success_challenge_ML, Robertson2019fast_screen_graph, Robertson2020dense_network_rxn, Fakhoury2023contactmap_directedwalks_protein_1, Fakhoury2024contactmap_directedwalks_protein_2} encodes each chemical species as a molecular graph, typically via a connectivity matrix. Reactions are then graph transformations corresponding to bond-making and bond-breaking events, governed by a predefined set of rules or templates. The full CRN becomes a graph whose vertices are connectivity patterns and whose edges are these transformations, and random walks on this graph provide a stochastic but scalable strategy for pathway sampling. Limitations of this approach include the neglect of conformational diversity, the difficulty of representing proton-coupled processes, and the exponential growth of the connectivity state space with molecular size.

\subsection{Bipartite molecule--reaction graphs}

A third family represents a CRN as a directed bipartite graph~\cite{Margraf2019systematic} in which one vertex partition corresponds to molecules and the other to reactions, with edges encoding participation. Directionality distinguishes reactants from products, and reversing the orientation of all edges incident to a reaction vertex leaves the representation unchanged, so reversibility is captured naturally. The construction generalises to arbitrary stoichiometry without modification and emphasises the indirect coupling between molecules induced by reactions, in the spirit of the ``virtual flask'' framework of~\cite{Simm2017contextCRN}. Bipartite graphs are also well suited to network-theoretic analysis, constraint-based modelling, and quantum-walk formulations.

In this work we adopt the bipartite representation as our baseline, since it cleanly separates species from reactions, is invariant under the choice of reaction orientation, and admits the electrical-network analogy on which our quantum walk algorithms rely (see \sect{masg}).

\end{document}